\newcommand{\iffextendedver}[2]{#1}
\newcommand{\ifextendedver}[1]{\iffextendedver{#1}{}}
\newcommand{\iffwithappendix}[2]{\iffextendedver{#1}{#2}}
\newcommand{\ifwithappendix}[1]{\iffextendedver{#1}{}}
\pgfplotsset{compat=1.14}
\newtheorem{claim}{Claim}{\itshape}{\rmfamily}
\newtheorem{theorem}{Theorem}{\itshape}{\rmfamily}
\newtheorem{lemma}{Lemma}{\itshape}{\rmfamily}
\newtheorem{definition}{Definition}{\itshape}{\rmfamily}
\crefname{section}{Sect.}{Sect.}
\Crefname{section}{Section}{Sections}
\crefname{line}{line}{lines}
\crefname{challenge}{challenge}{challenges}
\crefname{invariant}{invariant}{invariants}
\Crefname{line}{Line}{Lines}
\crefname{claim}{claim}{claims}
\Crefname{claim}{Claim}{Claims}
\newlist{propenum}{enumerate}{2} \setlist[propenum]{ref=\arabic*, label={\arabic*.}}
\crefname{propenumi}{property}{properties}
\crefname{propenumii}{property}{properties}
\newlist{ruleenum}{enumerate}{2} \setlist[ruleenum]{ref=R\arabic*, label={R\arabic*.}}
\crefname{ruleenumi}{rule}{rules}
\Crefname{ruleenumi}{Rule}{Rules}
\newlist{phaseenum}{enumerate}{2} \setlist[phaseenum]{ref=P-\Roman*, label={P-\Roman*.}}
\crefname{phaseenumi}{phase}{phases}
\Crefname{phaseenumi}{Phase}{Phases}
\newcounter{desccount}
\newcommand{\descref}[1]{\hyperref[#1]{#1}}
\newcommand{\removelatexerror}{\let\@latex@error\@gobble}
\newcommand{\tikzwrapfigbg}{\begin{pgfonlayer}{background}
    \path[fill=gray!10,rounded corners]
    (current bounding box.south west) rectangle
    (current bounding box.north east);
\end{pgfonlayer}}
\definecolor{mygreen}{rgb}{0.05, 0.5, 0.06}
\definecolor{myorange}{rgb}{0.93, 0.49, 0.1}
\definecolor{myred}{rgb}{0.82, 0.1, 0.26}
\definecolor{myblue}{rgb}{0.01, 0.28, 1.0}
\definecolor{myviolet}{rgb}{0.6, 0.4, 0.8}
\definecolor{mygray}{rgb}{0.9, 0.89, 0.89}
\definecolor{mypurple}{rgb}{0.41, 0.16, 0.38}
\newcommand\prog{{\mathbb P}}
\newcommand\conf\gamma
\newcommand\areg{\ensuremath{a}}
\newcommand\breg{\ensuremath{b}}
\newcommand\creg{\ensuremath{c}}
\newcommand\dreg{\ensuremath{d}}
\newcommand\thstate{\sigma}
\newcommand\run\rho
\newcommand\thrun\pi
\newcommand\pth\pi
\newcommand\transset{{\mathbb E}}
\newcommand\trans{e}
\newcommand\trace\tau
\newcommand\lockedtrace\sigma
\newcommand{\totorder}[1]{<_{#1}}
\newcommand\otrace\sigma
\newcommand\tlub\sqcup
\newcommand\tglb\sqcap
\newcommand\tequiv\sim
\newcommand\ttequiv\equiv
\newcommand\ctordering\sqSubset
\newcommand\tordering\sqsubseteq
\newcommand\stordering\sqsubset
\newcommand\eventset\transset
\newcommand\event\trans
\newcommand\extend\oplus
\newcommand\lbl\ell
\newcommand\silentlbl\varepsilon
\newcommand\assigned\leftarrow
\newcommand{\gbalg}{\mbox{\textsc{Optimal-DPOR-Await}}\xspace}
\newcommand{\algname}{\gbalg}
\newcommand{\keyword}[1]{\mbox{\bf #1}}
\newcommand\alphabet\Sigma
\newcommand{\tuple}[1]{\left\langle#1\right\rangle}
\newcommand{\set}[1]{\left\{#1\right\}}
\newcommand\app\bullet
\newcommand{\emptyseq}{\langle \rangle}
\newcommand\emptyword\epsilon
\newcommand\xvar{\ensuremath{\mathtt{x}}}
\newcommand\yvar{\ensuremath{\mathtt{y}}}
\newcommand\zvar{\ensuremath{\mathtt{z}}}
\newcommand\avar\areg
\newcommand\bvar\breg
\newcommand\cvar\creg
\newcommand\dvar\dreg
\newcommand\stmt[1]{\mbox{\texttt{#1}}}
\newcommand\await{\stmt{\textbf{await}}\xspace}
\newcommand\assume{\stmt{\textbf{assume}}\xspace}
\newcommand\nmodels\nvDash
\newcommand\restrict[2]{#1\raise-.5ex\hbox{\ensuremath|}_{#2}}
\newcommand\add\odot
\newcommand\bench[1]{\mbox{\small\textsf{#1}}\xspace}
\newcommand\toolnamefont[1]{\textsc{#1}\xspace}
\newcommand\CDSChecker{\toolnamefont{CDSChecker}}
\newcommand\RCMC{\toolnamefont{RCMC}}
\newcommand\GenMC{\toolnamefont{GenMC}}
\newcommand\Saver{\toolnamefont{Saver}}
\newcommand\Nidhugg{\toolnamefont{Nidhugg}}
\definecolor{mygreen}{rgb}{0.05, 0.5, 0.06}
\definecolor{p1bg}{HTML}{ffcc99}
\definecolor{p2bg}{HTML}{cbdcff}
\definecolor{p3bg}{HTML}{ffccff}
\definecolor{p4bg}{HTML}{ccff99}
\def\pa#1{\colorlet{saved}{.}\mbox{\colorbox{p1bg}{\color{saved}#1}}\color{saved}\xspace}
\def\pb#1{\colorlet{saved}{.}\mbox{\colorbox{p2bg}{\color{saved}#1}}\color{saved}\xspace}
\def\pc#1{\colorlet{saved}{.}\mbox{\colorbox{p3bg}{\color{saved}#1}}\color{saved}\xspace}
\def\pd#1{\colorlet{saved}{.}\mbox{\colorbox{p4bg}{\color{saved}#1}}\color{saved}\xspace}
\newcommand\awaiteq[2]{\textbf{await}({#1} $=$ {#2})}
\newcommand\xchgawaiteq[3]{\textbf{xchgawait}({#1} $=$ {#2}, := {#3})}
\newcommand\hbefore[1]{\xrightarrow{\text{hb}}_{#1}}
\newcommand\execution{\textit{E}}
\newcommand\node{\textit{n}}
\newcommand\pre{\textit{pre}}
\newcommand\true{{\it true}}
\newcommand\false{{\it false}}
\algnewcommand\algorithmicswitch{\textbf{switch}}
\algnewcommand\algorithmiccase{\textbf{case}}
\newcommand{\exploregb}{\mbox{\it Explore}}
\newcommand{\emptytree}{\tuple{\set{\emptyseq},\emptyset}}
\newcommand{\sleepset}{\mbox{\it Sleep}}
\newcommand{\WuT}{\mbox{\it WuT}}
\newcommand{\exseq}{E}
\newcommand{\exseqs}{{\cal E}}
\newcommand{\enables}{\vdash}
\newcommand{\enabledafter}[1]{\mbox{\it enabled}(#1)}
\newcommand{\domof}[1]{\mbox{\it dom}(#1)}
\newcommand{\domofafter}[2]{\mbox{\it dom}_{[#1]}(#2)}
\newcommand{\mayreverserace}[1]{\precsim_{#1}}
\newcommand{\notsucc}[2]{\mbox{\notdep}(#1,#2)}
\newcommand{\procof}[1]{\widehat{#1}}
\newcommand{\notdep}{{\it notdep}}
\newcommand{\sleepattr}{\mbox{\it sleep}}
\newcommand{\finalsleep}{\mbox{\it final\_sleep}}
\newcommand{\firsttrans}[2]{\mbox{\it I}_{[#1]}(#2)}
\newcommand{\wfirsttrans}[2]{\mbox{\it WI}_{[#1]}(#2)}
\newcommand{\wut}{\mbox{\it wut}}
\newcommand{\insertseq}[3]{\mbox{\it insert}(#2,#1)}
\newcommand{\infirstseqsop}{\lesssim}
\newcommand{\infirstseqs}[1]{\infirstseqsop_{[#1]}}
\newcommand{\indepafter}[3]{#1 \! \vdash \! #2 \diamondsuit #3}
\newcommand{\subtreeafter}[2]{\mbox{\sl subtree}(#2,#1)}
\newcommand{\nextof}[2]{\mbox{\it next}_{[#1]}(#2)}
\newcommand{\mtequiv}{\simeq}
\newcommand{\mtequivafter}[1]{\simeq_{[#1]}}
\newcommand{\mtclass}[1]{[#1]_{\simeq}}
\newcommand{\happensbefore}[1]{\hbefore{#1}}
\newcommand{\canstop}{\mbox{\it can-stop}}
\newcommand{\didinsert}{\mbox{\it did-insert}}
\renewcommand{\false}{\mbox{\keyword{False}}}
\renewcommand{\true}{\mbox{\keyword{True}}}
\newcommand{\meet}{\vee}
\newcommand{\bigmeet}{\bigvee}
\newcommand{\fpc}{\mbox{FPC}\xspace}
\newcommand{\fpcs}{\mbox{FPCs}\xspace}
\newcommand{\afpc}{a \fpc}
\newcommand{\fpcbefore}[1]{\fpc(\bullet {#1})}
\newcommand{\fpcafter}[1]{\fpc({#1} \bullet)}
\newcommand{\fpcedge}[2]{\fpc({#1}, {#2})}
\newcommand{\pccolor}{purple}
\newcommand{\pclit}[1]{\ensuremath{\mathrm{[}#1\mathrm{]}}\xspace}
\newcommand{\cpclit}[1]{{\color{\pccolor}\pclit{#1}}}
\newcommand{\pcfalse}{\pclit{\false}}
\newcommand{\cpcfalse}{\cpclit{\false}}
\newcommand{\exec}{\execution}
\newcommand{\fpcvar}{\varphi}
\newcommand{\loopvar}{\mathbb{L}}
\newcommand{\progpta}{l}
\newcommand{\progptb}{\progpta'}
\def\aftertablespace{-0.15cm}
\def\afterfigurespace{-0.15cm}
\def\afteralgorithmspace{-0.2cm}
\Crefname{section}{Section}{Sections}
\begin{document}

\title{Awaiting for Godot: Stateless Model Checking that
  Avoids Executions where Nothing Happens
  \ifextendedver{\\ {\LARGE (Extended Version with Proofs)}}
}

\author{\IEEEauthorblockN{Bengt Jonsson \orcid{0000-0001-7897-601X}}
\IEEEauthorblockA{Uppsala University, Sweden\\
Email: bengt@it.uu.se}
\and
\IEEEauthorblockN{Magnus Lång \orcid{0000-0003-0984-4229}}
\IEEEauthorblockA{Uppsala University, Sweden\\
Email: magnus.lang@it.uu.se}
\and
\IEEEauthorblockN{Konstantinos Sagonas \orcid{0000-0001-9657-0179}}
\IEEEauthorblockA{Uppsala University, Sweden and NTUA, Greece\\
Email: kostis@it.uu.se}
}

\maketitle

\begin{abstract}
  Stateless Model Checking (SMC) is a verification technique for concurrent
programs that checks for safety violations by exploring all possible thread
schedulings. It is highly effective when coupled with Dynamic Partial Order Reduction (DPOR), which introduces an equivalence on schedulings and need explore only one in each equivalence class.
Even with DPOR, SMC often spends unnecessary effort in exploring loop iterations that are \emph{pure}, i.e., have no effect on the program state.
We present techniques for making SMC with DPOR more effective on programs with
pure loop iterations. The first is a static program analysis to detect
loop purity and an associated program transformation, called
\emph{Partial Loop Purity Elimination}, that inserts \assume statements
to block pure loop iterations.
Subsequently, some of these \assume{s} are turned into \await
statements that completely remove many \assume-blocked executions.
Finally, we present an extension of the standard DPOR equivalence,
obtained by weakening the conflict relation between events.
All these techniques are incorporated into a new DPOR algorithm, \gbalg,
which can handle both \await{s} and the weaker conflict relation, is optimal
in the sense that it explores exactly one execution in each equivalence class,
and can also diagnose livelocks.
Our implementation in \Nidhugg shows that these techniques can significantly
speed up the analysis of concurrent programs that are currently challenging
for SMC tools, both for exploring their complete set of interleavings, but
even for detecting concurrency errors in them.

\end{abstract}

\section{Introduction}

Ensuring correctness of concurrent programs is difficult, since one must
consider all the different ways in which actions of different threads can be
interleaved.
Stateless model checking (SMC)~\cite{Godefroid:popl97} is a fully automatic
technique for finding concurrency bugs (i.e., defects that
arise only under some thread schedulings) and for verifying their absence.
Given a terminating program and fixed input data,
SMC systematically explores the set of
all thread schedulings that are possible during program runs.
A special runtime scheduler drives the SMC exploration by making decisions
on scheduling whenever such choices may affect the interaction between threads.
SMC has been implemented in many tools
(e.g., VeriSoft~\cite{Godefroid:verisoft-journal},
\textsc{Chess}~\cite{MQBBNN:chess}, Concuerror~\cite{Concuerror:ICST13},
\Nidhugg~\cite{tacas15:tso}, rInspect~\cite{DBLP:conf/pldi/ZhangKW15},
\CDSChecker~\cite{NoDe:toplas16}, \RCMC~\cite{KLSV:popl18}, and
\GenMC~\cite{GenMC@CAV-21}), and successfully applied to realistic
programs (e.g.,~\cite{GoHaJa:heartbeat} and~\cite{KoSa:spin17}).

SMC tools typically employ \emph{dynamic partial order reduction}
(DPOR)~\cite{FG:dpor,abdulla2014optimal} to reduce the number of explored
schedulings. DPOR defines an equivalence relation on executions, which
preserves relevant correctness properties, such as reachability of local
states and assertion violations. For correctness, DPOR needs to explore at
least one execution in each equivalence class. We call a DPOR algorithm
\emph{optimal} if it guarantees the exploration of exactly one execution per
equivalence class.

In SMC, loops have to be bounded if they
do not already terminate in a bounded number of iterations.
Loop bounding may in general not preserve assertion failures.
Hence a fairly large loop bound should be used,
but this is often practically infeasible,
and thus loop bounding must strike a balance between these two concerns.
However, for loops whose execution has no global effects, the number of
equivalence classes that need be explored by SMC can be significantly reduced
while still preserving correctness properties, using techniques that we will
present in this paper.

\begin{figure}[t]
  \centering\footnotesize
  \def\linespcadjust{0}\tikzset{style=tight background}
\begin{tikzpicture}[line width=1pt,framed]
      \tikzset{
        full/.style={text=blue},
        scale=0.8,
        every node/.style={scale=0.85},
      }
      \begin{scope}[local bounding box=p box]
        \node(n11)
             {\stmt{\textbf{if}(\xvar{}[0] > \xvar{}[1])}};
        \node(n12) [anchor=north west] at ($(n11.base west)+(5pt,\linespcadjust)$)
             {\stmt{swap(\xvar{}[0], \xvar{}[1]);}};
        \node(n13) [anchor=north west] at ($(n12.base west)+(-5pt,\linespcadjust)$)
             {\stmt{\yvar{} := 1;}};
        \node(n14) [full,anchor=north west] at ($(n13.base west)+(0,\linespcadjust)$)
             {\stmt{\textbf{do} \ \breg{} := \yvar{}}};
        \node(n15) [full,anchor=north west] at ($(n14.base west)+(0,\linespcadjust)$)
             {\stmt{\textbf{while}(\breg{} $\neq$ 2);}};
        \node(n16) [full,anchor=north west] at ($(n15.base west)+(0,\linespcadjust)$)
             {\stmt{\textbf{if}(\xvar{}[0] > \xvar{}[1])}};
        \node(n17) [full,anchor=north west] at ($(n16.base west)+(5pt,\linespcadjust)$)
             {\stmt{swap(\xvar{}[0], \xvar{}[1])}};
      \end{scope}
      \node[anchor=south] at ($(p box.north)+(0,\linespcadjust-5pt)$) {\pa{$p$}};
      \begin{scope}[local bounding box=q box]
        \node(n21) [anchor=north west] at ($(p box.north east)+(12pt,0)$)
             {\stmt{\textbf{do} \areg{} := \yvar{}}};
        \node(n22) [anchor=north west] at ($(n21.base west)+(0,\linespcadjust)$)
             {\stmt{\textbf{while}(\areg{} $\neq$ 1);}};
        \node(n23) [anchor=north west] at ($(n22.base west)+(0,\linespcadjust)$)
             {\stmt{\textbf{if}(\xvar{}[1] > \xvar{}[2])}};
        \node(n24) [anchor=north west] at ($(n23.base west)+(5pt,\linespcadjust)$)
             {\stmt{swap(\xvar{}[1], \xvar{}[2]);}};
        \node(n25) [full,anchor=north west] at ($(n24.base west)+(-5pt,\linespcadjust)$)
             {\stmt{\yvar{} := 2}};
      \end{scope}
      \node[anchor=south] at ($(q box.north)+(0,\linespcadjust-5pt)$) {\pb{$q$}};
      \draw[line width= 0.5pt] ($(p box.north east)+(4.5pt,10pt)$) -- ($(p box.south east)+(4.5pt,2pt)$);
      \draw[line width= 0.5pt] ($(p box.north east)+(7.5pt,10pt)$) -- ($(p box.south east)+(7.5pt,2pt)$);
\end{tikzpicture}
    \label{fig:prodcons:big}
\caption{A concurrent program implementing a sorting network. $p$
    sorts \stmt{\xvar[0]} and \stmt{\xvar[1]}, and then uses \yvar{} to signal
    that \stmt{\xvar[1]} is ready. $q$ waits for \yvar{} to be~1 and then sorts
    \stmt{\xvar[1]} and \stmt{\xvar[2]}, completing one round of bubble sort.
    In the second round, shown in \textcolor{blue}{blue}, $q$ signals that the next ``generation'' of \stmt{\xvar[1]} is
    ready by setting \yvar{} to 2, upon which $p$ finishes the sort by sorting
    \stmt{\xvar[0]} and \stmt{\xvar[1]} again.
    Initially $\yvar = 0$.}
  \label{fig:prodcons}
  \vspace{\afterfigurespace}
\end{figure}

Consider the first round of the program snippet in~\cref{fig:prodcons} (shown in black), where thread $q$ executes a loop that waits for thread $p$
to set the shared variable $\yvar$ to $1$.
A na{\"\i}ve application of SMC with DPOR will explore an unbounded number of
executions, since (in the absence of loop bounding) there is an infinite
number of equivalence classes, one for each number of performed loop
iterations. All iterations of this loop, however, are \emph{pure}, i.e., they
have no effect on the program state. For such loops, a bound of one will
preserve correctness properties. In our example, the \stmt{\textbf{do-while}}
loop of thread $q$ can be rewritten into the sequence of statements
\stmt{\avar{} := \yvar; \textbf{assume}}($\avar = 1$), which will cause the
SMC exploration to permanently block thread $q$ whenever the condition of the
\assume is violated.

Using \assume statements to bound loops causes executions where the condition
of the \assume is violated and its corresponding thread is blocked to be
explored. This happens even if the condition will eventually be satisfied, and
the original
loop will exit, under any fair thread scheduling.
\emph{Assume-blocking} of a thread can occur in many contexts,
each generating an execution that need not be explored.
(We will shortly see this for the example in \cref{fig:prodcons}.)
Furthermore, and perhaps more seriously, this use of \assume{}s prevents SMC
from diagnosing livelocks in which the loop never exits even under fair thread
scheduling.
This is because a blocked execution corresponding to a livelock can also
result from a spurious execution in which the \assume reads a shared variable
before it has been written to by another thread.

Here is where \await statements can lead to further reductions.
An \await loads from a shared variable, but only if the loaded value satisfies
some condition, otherwise it blocks.
In contrast to assume-blocking, \emph{await-blocking} is not permanent
but can be repealed if the condition is later satisfied.
Thereby, executions where blocking occurs by reading ``too early'' are
avoided. Moreover, such executions can be distinguished from livelocks, in
which the condition is not satisfied after some bounded time.
For our example, the rewrite of the \stmt{\textbf{do-while}} loop into an
$\await(\yvar = 1)$ statement results in a program for which SMC would explore
only a single execution in which the \await reads the value written by thread $p$.

Consider now the full program in~\cref{fig:prodcons}, which performs a
concurrent sort of a three-element array using a sorting network. This program
can be scaled to larger arrays for increased available parallelism. Since any
network sorting an array of size $n$ will have at least $\Omega(n \log n)$
occurrences of a code snippet which exchanges two values after exiting a
spinloop, exploring such a program with SMC will explore $\Omega(2^{n \log
  n})$ executions, even after rewriting the spinloops using \assume
statements.
On the other hand, when using \await statements, all executions fall into the
same equivalence class. Thus, an optimal SMC algorithm that can properly handle
\await{}s will explore only one execution, thereby achieving exponential reduction.

In this paper, we present techniques to
\begin{inparaenum}[(i)]
\item automatically transform a program to an intermediate representation that
  uses \await as a primitive, and
\item explore its executions using a provably optimal DPOR algorithm
that is \await aware and also uses a conflict relation between statements
  which is weaker than the standard one.
\end{inparaenum}
We first present a static program analysis technique to detect pure
loop executions and an associated program transformation, called \emph{Partial
Loop Purity (PLP) Elimination}, that inserts \assume statements which are then
turned into \await{}s if preceded by the appropriate load. We prove that PLP
is sound in the sense that it preserves relevant correctness properties,
including local state reachability and assertion failures. We also present and
prove conditions under which PLP is guaranteed to remove all pure executions
of a loop.
Finally, we prove that our new DPOR algorithm \gbalg, which is an extension
of the Optimal-DPOR algorithm of Abdulla et al.~\cite{abdulla2014optimal,optimal-dpor-jacm},
is correct and optimal, also with respect to our weaker conflict relation.

All these techniques are available in \href{https://github.com/nidhugg/nidhugg}{\Nidhugg}, a state-of-the-art SMC tool,
and in the paper's replication package~\cite{godot-artifact@FMCAD-22}.
Our evaluation, using multi-threaded programs which are currently challenging
for most tools, shows that our techniques can achieve significant (and
sometimes exponential) reduction in the total number of executions that need
to be explored. Moreover, they enable detection of concurrency bugs which were
previously out-of-reach for most concurrency testing tools.

\section{Illustration Through Examples}

In this section, we illustrate our contributions through examples.
First, in \cref{sec:spinloops} we show how \assume and \await statements
are inserted. In \cref{sec:alg-by-example} we illustrate how our optimal
DPOR algorithm handles \await statements, and in \cref{sec:inc-by-example}
how it handles the weaker conflict relation in which atomic fetch-and-adds
on the same variable are not conflicting.

We consider programs consisting of a finite set
of threads that share
a finite set of shared variables (\xvar, \yvar, \zvar). A thread has a finite set of local registers (\areg, \breg, \creg), and runs
a deterministic code, built from expressions, atomic statements, and
synchronisation operations, using standard control flow constructs.
Atomic statements read or write to shared variables and local registers,
including atomic read-modify-write operations, such as compare-and-swap and fetch-and-add.
Synchronisation operations include locking a mutex
and joining another thread.
Executions of a program are defined by an interleaving of statements.
We use sequential consistency in this paper, but we note that some weak
memory models (e.g., TSO and PSO) can be modelled by an interleaving-based
semantics, so our work can be extended to DPOR algorithms~\cite{tacas15:tso}
that handle such memory models.
Our loop transformations introduce
\await statements, that take
a conditional expression over a global variable as a parameter and
come in several forms:
simple awaits (\stmt{\awaiteq{\xvar}{0}}),
load-await (\stmt{\areg{} := \awaiteq{\xvar}{0}}),
and exchange-await (\stmt{\areg{} :=  \xchgawaiteq{\xvar}{0}{1}}).
These operations block until their condition is satisfied.

\subsection{Introducing Await Statements}
\label{sec:spinloops}

\begin{figure}[t]
  \centering
  \tikzset{style=tight background}
  \def\linespcadjust{0}\hspace{-3mm}
  \scalebox{0.8}{
\subfloat[][A program with a spinloop.]{
    \begin{tikzpicture}[line width=1pt,framed]
      \begin{scope}[local bounding box=p box]
        \node(n11) {\stmt{\textbf{do} \ \areg{} := x}};
        \node(n12) [anchor=north west] at ($(n11.base west)+(0,\linespcadjust)$)
             {\stmt{\textbf{while}(\areg{} $\neq$ 1);}};
        \node(n13) [anchor=north west] at ($(n12.base west)+(0,\linespcadjust)$)
             {\stmt{\breg{} := \yvar{}}};
      \end{scope}
      \node[anchor=south] at ($(p box.north)+(0,\linespcadjust-5pt)$) {\pa{$p$}};
      \begin{scope}[local bounding box=q box]
        \node(n21) [anchor=north west] at ($(p box.north east)+(6pt,0)$)
             {\stmt{\yvar{} := 42;}};
        \node(n22) [anchor=north west] at ($(n21.base west)+(0,\linespcadjust)$)
             {\stmt{\xvar{} := 1}};
      \end{scope}
      \node[anchor=south] at ($(q box.north)+(0,\linespcadjust-5pt)$) {\pb{$q$}};
      \draw[line width= 0.5pt] ($(p box.north east)+(1.5pt,5pt-\linespcadjust)$) -- ($(p box.south east)+(1.5pt,3pt)$);
      \draw[line width= 0.5pt] ($(p box.north east)+(4.5pt,5pt-\linespcadjust)$) -- ($(p box.south east)+(4.5pt,3pt)$);
    \end{tikzpicture}
    \label{fig:spinwait:original}
  }\subfloat[][$p$ rewritten with \assume.]{
    \begin{tikzpicture}[line width=1pt,framed]
      \begin{scope}[local bounding box=p box]
        \node(n11) {\stmt{\areg{} := \xvar{};}};
        \node(n12) [anchor=north west] at ($(n11.base west)+(0,\linespcadjust)$)
             {\stmt{\textbf{assume}(\areg{} $=$ 1);}};
        \node(n13) [anchor=north west] at ($(n12.base west)+(0,\linespcadjust)$)
             {\stmt{\breg{} := \yvar{}}};
      \end{scope}
      \node[anchor=south] at ($(p box.north)+(0,\linespcadjust-5pt)$) {\pa{$p$}};
\end{tikzpicture}
    \label{fig:spinwait:assume}
  }\subfloat[][$p$ rewritten with \await.]{
    \centering
    \begin{tikzpicture}[line width=1pt,framed]
      \begin{scope}[local bounding box=p box]
        \node(n11) {\stmt{\awaiteq{\xvar}{1};\space}};
        \node(n12) [anchor=north west] at ($(n11.base west)+(0,\linespcadjust)$)
             {\stmt{\breg{} := \yvar{}}};
      \end{scope}
      \node[anchor=south] at ($(p box.north)+(0,\linespcadjust-5pt)$) {\pa{$p$}};
\end{tikzpicture}
    \label{fig:spinwait:await}
  }
  }
  \caption{Multi-threaded program illustrating the rewrites;
    initially, $\xvar = \yvar = 0$. For (b) and (c), $q$ is the same as in (a).}
  \label{fig:spinwait}
  \vspace{\afterfigurespace}
\end{figure}
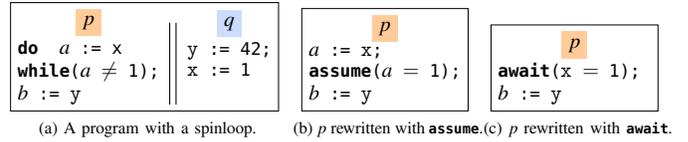

Let us show an example of how loops are transformed by introducing
\assume and \await statements.
Consider the loop in \cref{fig:spinwait:original}.
There, thread $p$ executes a spinloop, waiting for thread $q$ to set the shared
variable~$\xvar$.
Each iteration of this loop, in which the value loaded into $\avar$ is different from $1$,
is pure, i.e., it does not modify shared variables, nor any local register that may
be used after the end of the loop. Therefore an \assume statement is introduced at
the point where the thread can distinguish pure executions from impure ones, i.e., after
$\areg$ has been loaded.
The result of such a rewrite is shown in \cref{fig:spinwait:assume}.
This program has two traces, one in which the \assume succeeds, representing the executions
in which the original loop terminates, and one where thread $p$ gets \emph{assume-blocked}.
The latter trace will exist even in the case where the original loop is guaranteed to
terminate under a fair scheduler.
This problem is remedied by replacing the load into $\areg$ and the following
\assume statement by an \await with a test on the shared variable from which
$\areg$ reads.
Such a rewrite results in the program in \cref{fig:spinwait:await}.
In this case, the \await statement may permanently block only if the original
loop can livelock under fair scheduling.
In our simple example, the rewritten program has only a single trace, since
the original loop is guaranteed to terminate and can be replaced by the \await.
Programs with more complex loops (e.g., loops that are pure only along a
subset of their paths) are also handled by our program transformation
(\cref{sec:plp}), but the loop is not eliminated when \assume{}s or \await{}s
are introduced.

\subsection{\gbalg by Example}
\label{sec:alg-by-example}

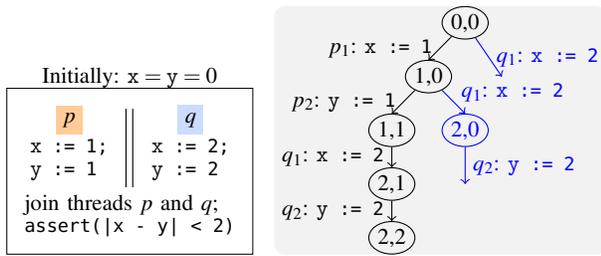
\begin{figure}[t]
  \centering
  \scalebox{0.85}{
  \begin{minipage}[b]{0.45\linewidth}
    \centering
    \def\linespcadjust{0}\scalebox{1}{Initially: \mbox{$\xvar = \yvar = 0$}}
    \begin{tikzpicture}[line width=1pt,framed,
        scale=1,
        every node/.style={scale=1},
      ]
      \begin{scope}[local bounding box=p box]
        \node(n11) {\stmt{\xvar{} := 1;}};
        \node(n12) [anchor=north west] at ($(n11.base west)+(0,\linespcadjust)$)
             {\stmt{\yvar{} := 1}};
      \end{scope}
      \node[anchor=south] at ($(p box.north)+(0,\linespcadjust-5pt)$) {\pa{$p$}};
      \begin{scope}[local bounding box=q box]
        \node(n21) [anchor=north west] at ($(p box.north east)+(12pt,0.5pt)$)
             {\stmt{\xvar{} := 2;}};
        \node(n22) [anchor=north west] at ($(n21.base west)+(0,\linespcadjust)$)
             {\stmt{\yvar{} := 2}};
      \end{scope}
      \node[anchor=south] at ($(q box.north)+(0,\linespcadjust-5pt)$) {\pb{$q$}};
      \draw[line width= 0.5pt] ($(p box.north east)+(4.5pt,10pt)$) -- ($(p box.south east)+(4.5pt,0)$);
      \draw[line width= 0.5pt] ($(p box.north east)+(7.5pt,10pt)$) -- ($(p box.south east)+(7.5pt,0)$);

      \node(nj1) [anchor=north west] at ($(n12.base west)+(-2.5pt,-5pt)$) {join threads $p$ and $q$;};
      \node(nj2) [anchor=north west] at ($(nj1.base west)+(0,\linespcadjust)$) {{\tt assert(|x - y| < 2)}};
    \end{tikzpicture}
  \end{minipage}
  }
  \begin{minipage}[b]{0.24\textwidth}
    \centering
    \begin{tikzpicture}[
        level distance=1.1cm,
        st/.style={ellipse,inner sep=1pt,draw},
        child anchor=north,
        scale=0.65,
        every node/.style={scale=0.85},
      ]
      \node[st] {0,0}
      child { node[st] {1,0}
        child { node[st] {1,1}
          child { node[st] {2,1}
            child { node[st] {2,2}
              edge from parent [->] node [left] {$q_2$: \stmt{\yvar{} := 2}}
            }
            edge from parent [->] node [left] {$q_1$: \stmt{\xvar{} := 2}}
          }
          edge from parent [->] node [left] {$p_2$: \stmt{\yvar{} := 1}}
        }
        child { node[st,color=blue] {2,0}
          child {
            edge from parent [->,color=blue] node [right] {$q_2$: \stmt{\yvar{} := 2}}
          }
          edge from parent [->,color=blue] node [above right=-3pt and 0pt] {$q_1$: \stmt{\xvar{} := 2}}
        }
        edge from parent [->] node [left] {$p_1$: \stmt{\xvar{} := 1}}
      }
      child {
        edge from parent [->,color=blue] node [right] {$q_1$: \stmt{\xvar{} := 2}}
      };
      \tikzwrapfigbg
    \end{tikzpicture}
\end{minipage}
\caption{Program with a correctness assertion, and
    execution trees with the first scheduling of the program; nodes show the
    values of variables $\xvar$ and~$\yvar$.}
  \label{fig:example1}
  \vspace{\afterfigurespace}
\end{figure}

DPOR algorithms are based on regarding executions as equivalent if they induce the same ordering between
executions of conflicting statements. The standard conflict relation regards two accesses to the same variable as
conflicting if at least one is a write.
We begin by illustrating the Optimal-DPOR algorithm~\cite{optimal-dpor-jacm}
on the simple program in \cref{fig:example1}. There two threads, $p$ and $q$,
write to two shared variables $\xvar$ and $\yvar$ in sequence.
Optimal-DPOR starts by exploring an arbitrary interleaved execution of
the program. Assume it is $p_1.p_2.q_1.q_2$ as shown in \cref{fig:example1}
(we will denote executions by sequences of thread identifiers, possibly subscripted
by sequence numbers).
Each explored execution is then analysed
to find \emph{races}, i.e., pairs of conflicting events that are adjacent in the
happens-before order induced by the conflict relation.
(An \emph{event} is a particular execution step of a thread in an execution.)
Our first execution contains two races, $(p_1,q_1)$ and $(p_2,q_2)$.
For each race, Optimal-DPOR creates a so-called \emph{wakeup sequence},
i.e., a sequence which continues the analysed execution up to the first event in a way which reaches
the second event instead of the first event.
For the first race, the wakeup sequence is $q_1$, and for the second race,
it is 
$p_1.q_1.q_2$.
The wakeup sequences are inserted as new branches just before the first event of the
corresponding race, thereby gradually building a tree consisting of the explored executions and
added wakeup sequences.
The execution tree after the first execution is shown in
\cref{fig:example1}.

After processing the first execution, Optimal-DPOR then picks the leftmost unexplored leaf
in the tree, and extends it arbitrarily to a full execution, in which races are analysed, etc.
As the algorithm backtracks, it deletes the nodes it backtracks from in the execution
tree.
The second execution has two races, $(p_1,q_1)$ as well as $(p_2,q_2)$. However, the corresponding wakeup sequences will result in executions that are redundant, i.e.,
equivalent to already inserted ones, so no further insertion takes place.
The algorithm proceeds in this way until there are no more unexplored leafs corresponding to wakeup
sequences. In total, there are four executions explored by Optimal-DPOR,
corresponding to the four possible final valuations of \xvar{} and \yvar{}.

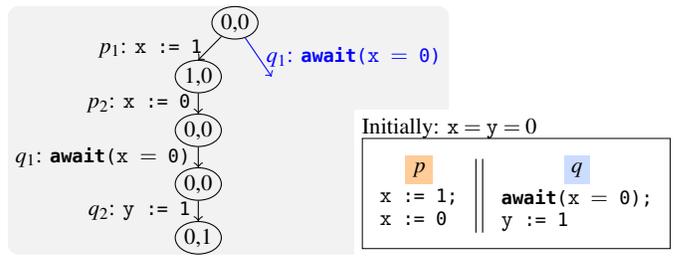
\begin{figure}[t]
\begin{minipage}[b]{\columnwidth}
\begin{tikzpicture}[
        scale=0.65,
        level distance=1.1cm,
        st/.style={ellipse,inner sep=1pt,draw},
        child anchor=north,
        every node/.style={scale=0.85},
      ]
      \node[st] {0,0}
      child { node[st] {1,0}
        child { node[st] {0,0}
          child { node[st] {0,0}
            child { node[st] {0,1}
              edge from parent [->] node [left] {$q_2$: \stmt{\yvar{} := 1}}
            }
            edge from parent [->] node [left] {$q_1$: \stmt{\awaiteq{\xvar}{0}}}
          }
          edge from parent [->] node [left] {$p_2$: \stmt{\xvar{} := 0}}
        }
        edge from parent [->] node [left] {$p_1$: \stmt{\xvar{} := 1}}
      }
      child {
        edge from parent [->,color=blue] node [right] {$q_1$: \stmt{\awaiteq{\xvar}{0}}}
      };
      \node[below right=3.3em and 4.5em,fill=white,overlay]  {
        \begin{minipage}[b]{0.28\textwidth}
          \centering
          \def\linespcadjust{0}\scalebox{1}{Initially: \mbox{$\xvar = \yvar = 0$}}
          \begin{tikzpicture}[line width=1pt,framed,
              scale=1,
              every node/.style={scale=1},
            ]
            \begin{scope}[local bounding box=p box]
              \node(n11) {\stmt{\xvar{} := 1;}};
              \node(n12) [anchor=north west] at ($(n11.base west)+(0,\linespcadjust)$)
                   {\stmt{\xvar{} := 0}};
            \end{scope}
            \node[anchor=south] at ($(p box.north)+(0,\linespcadjust-5pt)$) {\pa{$p$}};
            \begin{scope}[local bounding box=q box]
              \node(n21) [anchor=north west] at ($(p box.north east)+(12pt,0.5pt)$)
                   {\stmt{\awaiteq{\xvar}{0};}};
                   \node(n22) [anchor=north west] at ($(n21.base west)+(0,\linespcadjust)$)
                        {\stmt{\yvar{} := 1}};
            \end{scope}
            \node[anchor=south] at ($(q box.north)+(0,\linespcadjust-5pt)$) {\pb{$q$}};
            \draw[line width= 0.5pt] ($(p box.north east)+(4.5pt,10pt)$) -- ($(p box.south east)+(4.5pt,0)$);
            \draw[line width= 0.5pt] ($(p box.north east)+(7.5pt,10pt)$) -- ($(p box.south east)+(7.5pt,0)$);
          \end{tikzpicture}
        \end{minipage}
      };
      \tikzwrapfigbg
    \end{tikzpicture}
  \end{minipage}
  \caption{Exploration of a program with an \await with two satisfying writes.}
  \label{fig:awaitex}
  \vspace{\afterfigurespace}
\end{figure}

Let us now look at how \gbalg extends Optimal-DPOR to work for programs with \await{}s.
Consider the program in \cref{fig:awaitex}. There, $p$ writes to the global
variable \xvar, first updating it to $1$, and then back to $0$. Assume that the
first execution is $p_1.p_2.q_1.q_2$.
The analysis of races performed by Optimal-DPOR must now be extended to consider that
\await statements are sometimes blocked. First, the conflict between
$p_2$ with $q_1$ will not be handled like a race, since $q_1$ is blocked just before $p_2$.
Therefore, we find the closest preceding point in the execution at which $q_1$ is
not blocked, which in this case is at the beginning.
We then construct the wakeup sequence $q_1$ and insert it at the beginning;
cf. \cref{fig:awaitex}. Since this program only has two traces, \gbalg
will terminate after exploring the second execution.

\subsection{Handling Atomic Fetch-and-Add Instructions in DPOR}
\label{sec:inc-by-example}

To reduce the number of equivalence classes that need be explored by a DPOR algorithm,
one can weaken the standard conflict relation between statements by considering
two atomic fetch-and-add (FAA) statements on the same variable as non-conflicting
if the loaded values are afterwards unused.
In the absence of \await statements, many existing DPOR algorithms like Optimal-DPOR handle this definition without modification.
However, this weakening has a subtle interaction with \await statements
that must be handled by \gbalg.

Consider the program in \cref{fig:optimalgb-example}. In this
program, three threads, $p$, $q$, and $r$, add atomically to the shared
variable $\xvar$, and a thread $s$ awaits $\xvar$ having the value $3$. We
assume that DPOR considers the FAA statements $p_1$, $q_1$, and
$r_1$ to be non-conflicting, but conflicting with the statement $s_1$, should it
execute.

\begin{figure}[t]

\begin{tikzpicture}[
        scale=0.65,
        level distance=1.1cm,
        st/.style={ellipse,inner sep=1pt,draw},
        child anchor=north,
        every node/.style={scale=0.85},
      ]
      \node[st] {0,0}
      child { node[st] {1,0}
        child { node[st] {2,0}
          child { node[st] {5,0}
            edge from parent [->] node [left] {$r_1$: \stmt{x+:=3}}
          }
          edge from parent [->] node [left] {$q_1$: \stmt{x+:=1}}
        }
        edge from parent [->] node [left] {$p_1$: \stmt{x+:=1}}
      }
      child { node[st,color=blue] {3,0}
        child {
          edge from parent [->,color=blue] node [right] {$s_1$: \stmt{\awaiteq{\xvar}{3}}}
        }
        edge from parent [->,color=blue] node [right] {$r_1$: \stmt{x+:=3}}
      };
      \node[below right=3.9em and 1.7em,fill=white,overlay]  {
        \begin{minipage}[b]{0.35\textwidth}
          \centering
          \small
          \def\linespcadjust{0}\begin{tikzpicture}[line width=1pt,framed,
              scale=1,
              every node/.style={scale=1},
              style=tight background,
            ]
            \begin{scope}[local bounding box=p box]
              \node(n11) {\stmt{\xvar{} +:= 1}};
            \end{scope}
            \node[anchor=south] at ($(p box.north)+(0,\linespcadjust-5pt)$) {\pa{$p$}};
            \begin{scope}[local bounding box=q box]
              \node(n21) [anchor=north west] at ($(p box.north east)+(10pt,0.5pt)$)
                   {\stmt{\xvar{} +:= 1}};
            \end{scope}
            \node[anchor=south] at ($(q box.north)+(0,\linespcadjust-5pt)$) {\pb{$q$}};
            \draw[line width= 0.5pt] ($(p box.north east)+(3.5pt,10pt)$) -- ($(p box.south east)+(3.5pt,0)$);
            \draw[line width= 0.5pt] ($(p box.north east)+(6.5pt,10pt)$) -- ($(p box.south east)+(6.5pt,0)$);
            \begin{scope}[local bounding box=r box]
              \node(n31) [anchor=north west] at ($(q box.north east)+(10pt,0.5pt)$)
                   {\stmt{\xvar{} +:= 3}};
            \end{scope}
            \node[anchor=south] at ($(r box.north)+(0,\linespcadjust-5pt)$) {\pc{$r$}};
            \draw[line width= 0.5pt] ($(q box.north east)+(3.5pt,10pt)$) -- ($(q box.south east)+(3.5pt,0)$);
            \draw[line width= 0.5pt] ($(q box.north east)+(6.5pt,10pt)$) -- ($(q box.south east)+(6.5pt,0)$);
            \begin{scope}[local bounding box=s box]
              \node(n41) [anchor=north west] at ($(r box.north east)+(10pt,1pt)$)
                   {\stmt{\awaiteq{\xvar}{3};}};
                   \node(n42) [anchor=north west] at ($(n41.base west)+(0,\linespcadjust)$)
                        {\stmt{\yvar{} := 1}};
            \end{scope}
            \node[anchor=south] at ($(s box.north)+(0,\linespcadjust-5pt)$) {\pd{$s$}};
            \draw[line width= 0.5pt] ($(r box.north east)+(3.5pt,10pt)$) -- ($(r box.south east)+(3.5pt,0)$);
            \draw[line width= 0.5pt] ($(r box.north east)+(6.5pt,10pt)$) -- ($(r box.south east)+(6.5pt,0)$);
          \end{tikzpicture}
        \end{minipage}
      };
      \tikzwrapfigbg
    \end{tikzpicture}
    \vspace{0.6em} \caption{Exploration of a program with fetch-and-adds. Initially, \mbox{$\xvar = \yvar = 0$}.}
  \label{fig:optimalgb-example}
  \vspace{\afterfigurespace}
\end{figure}
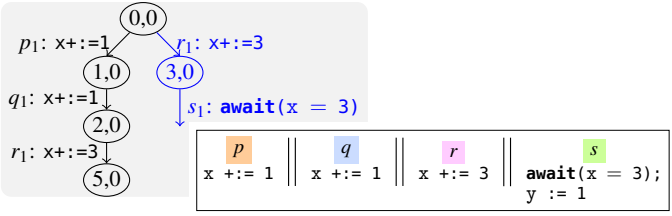

Assume that the first explored execution is $p_1.q_1.r_1$. From this point,
we cannot substitute $s_1$ for either of $p_1$, $q_1$, or $r_1$, as $s_1$ is not
enabled after any of $q_1.r_1$, $p_1.r_1$ or $p_1.q_1$, respectively. Yet, there
is another execution in which $s_1$ is enabled. In order to construct
this execution, we must not only schedule $s_1$ before one of the other events,
but before two, both of $p_1$ and $q_1$, so that only $r_1$ remains. Then, we
could construct the wakeup sequence $r_1.s_1$.
In general, \gbalg may need to reorder the sequence of independent FAAs
that precede  an \await statement and select a subsequence of them, in order
to unblock the \await statement. This can be done in several ways,
and \gbalg is optimised to avoid enumerating all of them.
In \cref{sec:algorithm-code}, we will see how.

\section{Partial Loop Purity Elimination}
\label{sec:plp}
In this section, we describe \emph{Partial Loop Purity Elimination},
a technique that prevents SMC from exploring executions
with pure loop iterations. It consists of
\begin{inparaenum}[(1)]
\item a static analysis technique which annotates programs with conditions
  under which a loop will execute a pure iteration, and
\item a program transformation which inserts \assume statements based on
  the analysis.
\end{inparaenum}

We consider loops consisting of a set of basic blocks, with a single
header block. Each basic block contains a sequence of program statements.
Blocks are connected via edges, labelled by conditions.
We also consider program representations on Static Single Assignment (SSA) form,
which means that each register is assigned by exactly one statement.
Thus, a register uniquely identifies the statement that assigns to it.
When the value of a register in one block depends on which predecessor block was
executed, this is expressed using a \emph{phi node}. For example, in a block $C$
with predecessors $A$ and $B$ containing registers $\areg$ and $\breg$,
respectively, the statement $\creg := \phi(A:\areg, B:\breg)$ defines the
register $\creg$ to get the value of $\areg$ when the previous basic block was
$A$ and of $\breg$ when the previous block was $B$.

An execution of a loop iteration is \emph{pure} if the execution starts and
ends at the header of the loop, and during the iteration
\begin{inparaenum}[(i)]
\item no modification of a global variable is performed,
\item nor of any local variable that may be used after the end of the iteration, and
\item no \emph{internal} (not to the header) backedge is taken.
\end{inparaenum}
In SSA form,
modification of local variables can be inferred from
the phi nodes in
the header. If such a phi node uses a different value
on the backedge
to the header
than when first entering, then the loop iteration modified a local variable that is used
on some path after the iteration\newcommand{\impureheader}{impure\xspace}, and we call the header \emph{\impureheader} along the backedge.
Our definition considers executions that complete inner loop iterations to be non-pure.
However, our PLP transformation will block inner loops from completing pure iterations.

A register \areg{} \emph{reaches} a program point $\progpta$ if
all paths to $\progpta$ pass \areg's definition.
\newcommand{\deftrue}{defined-true\xspace}During a loop execution,
  we say that an expression over registers
  is \emph{\deftrue} at some program point~$\progpta$ in the loop,
  if the expression evaluates to true under
  \begin{inparaenum}[(i)]
  \item the current valuation of registers that were assigned either outside
    the loop or during the current loop iteration, and
  \item any valuation of all other registers.
  \end{inparaenum}
We now define a central concept; that of the Forward Purity Condition.

\begin{definition}[Forward Purity Condition]
  Let $\progpta$ be a program point in a loop.
Then, a \emph{Forward Purity Condition} (\fpc) at
  $\progpta$ is an expression in Disjunctive Normal Form
  over the registers such that
  if an execution, without leaving the loop or taking an internal backedge,
  proceeds to a program point $\progptb$,
  at which the expression is \deftrue, then
\begin{enumerate}[(i)]
  \item the execution from $\progptb$ will reach the loop header
    without taking an internal backedge, and
  \item the execution from $\progpta$ to the loop header will not modify any global
    variables nor any local variable that may be used after execution has reached
    the loop header. \qed
\end{enumerate}
\end{definition}
We will denote \afpc with brackets, for example \pclit{\creg > 42} or \pcfalse.
A \emph{purity condition} (PC) of a loop is \afpc of the loop at the beginning
of its header. Thus, whenever a loop iteration passes a program point where the PC is
\deftrue, and has not taken an internal backedge, then that iteration is pure.

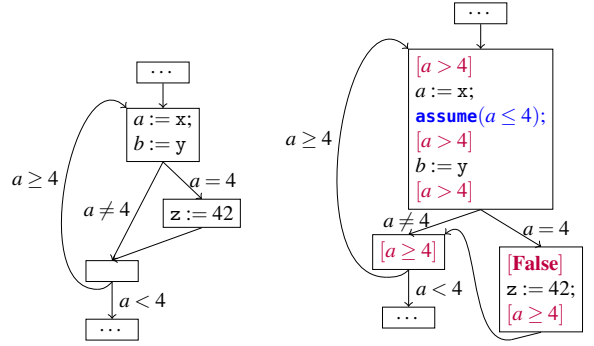
\begin{figure}[t]
  \centering
  \subfloat[][A loop with non-purity\\ and conditional branches.\label{fig:plp-branched-before}]{
    \scalebox{0.93}{
    \begin{tikzpicture}
      \tikzset{
        minimum height=1em,
        minimum width=2.5em,
        node distance=1.5em,
        bb/.style={draw, rectangle, align=left},
        every node/.style={scale=0.85},
      }
      \node[bb] (before) {\dots};
      \node[bb,below=10pt of before] (header) {$\areg := \xvar$;\\ $\breg := \yvar$};
      \draw[->] (before) -> (header);
      \node[bb,below right=1.5em and -1.5em of header] (impure) {$\zvar := 42$};
      \path (header.south) edge[->] node[midway,right] {$\areg = 4$} (impure.north);
      \node[bb,below left=4em and -0.5em of header] (cond) {};
      \path (impure.south) edge[->] (cond.north);
      \path (header.south) edge[->] node[left] {$\areg \neq 4$} (cond.north);
      \draw (cond.south) to[out=225,in=0] ++(-6pt,-3pt)
      edge[->,out=180,in=135] node[midway,left] {$\areg \geq 4$}
      (header.north west);
      \node[bb,below=of cond] (after) {\dots};
      \path (cond) edge[->] node[right] {$\areg < 4$} (after);
\end{tikzpicture}
    }
  }
  \subfloat[][The loop annotated with {\color{\pccolor}\fpcs} and with the {\color{blue}\assume} that is inserted.\label{fig:plp-branched-after}]{
    \scalebox{0.93}{
    \begin{tikzpicture}
      \tikzset{
        minimum height=1em,
        minimum width=2.5em,
        node distance=1.5em,
        bb/.style={draw, rectangle, align=left},
        every node/.style={scale=0.85},
        bezier bounding box=true,
      }
      \node[bb] (before) {\dots};
      \node[bb,below=10pt of before] (header) {\cpclit{\areg > 4}\\
        $\areg := \xvar$;\\
        \textcolor{blue}{$\assume(\areg \leq 4)$;}\\
        \cpclit{\areg > 4}\\
        $\breg := \yvar$\\
        \cpclit{\areg > 4}};
      \draw[->] (before) -> (header);
      \node[bb,below right=1.5em and -2.2em of header] (impure) {\cpcfalse\\
        $\zvar := 42$;\\
        \cpclit{\areg \geq 4}};
      \path (header.south) edge[->] node[midway,right=2pt]
            {$\areg = 4$} (impure.north);
      \node[bb,below left=1em and -1.5em of header] (cond)
            {$\cpclit{\areg \geq 4}$};
      \draw (impure.south) to ++(-12pt,-3pt) edge[->,out=190,in=40] (cond.north
      east);
      \path (header.south) edge[->] node[left=3pt] {$\areg \neq 4$} (cond.north);
\draw (cond.south) to[out=225,in=0] ++(-9pt,-3pt)
      edge[->,out=180,in=135] node[midway,left] {$\areg \geq 4$}
      (header.north west);
\node[bb,below=of cond] (after) {\dots};
      \path (cond) edge[->] node[right] {$\areg < 4$} (after);
\end{tikzpicture}
    }
  }
  \caption{Program snippet illustrating the concepts of the PLP transformation.}
  \label{fig:plp-branched}
  \vspace{\afterfigurespace}
\end{figure}

We illustrate these concepts for the program snippet in
\cref{fig:plp-branched-before}. In it, the loop loads $\xvar$ and $\yvar$ into
registers $\avar$ and~$\bvar$, then branches on the value of $\areg$, and along
the path where $\areg
= 4$, there is a write to $\zvar$. Since a write to a global variable is
non-pure, the loop is not pure whenever $\areg = 4$.
The two paths converge in a common block where a loop
condition ($\areg \geq 4$) is checked. This loop is pure if
\begin{inparaenum}[(i)]
\item it takes the backedge, i.e., $\areg \geq 4$ holds, and
\item the write to $\zvar$ is not performed, i.e., $\areg \neq 4$ also
  holds.
\end{inparaenum}
The conjunction of these conditions, $\areg > 4$, becomes a purity condition
for the entire loop.
We thereafter insert an \assume with the negation of a disjunct of the PC
at the earliest point that it is \deftrue, i.e.,
after the load of $\xvar$, shown in blue in \cref{fig:plp-branched-after}.

Let us now describe the analysis stage for computing purity conditions.
Its first step is to compute \fpcs at all points in the loop.
Intuitively, the \fpc at a point $\progpta$
is a disjunction $c_1 \meet \cdots \meet c_n$, where each $c_i$ is a (forward)
path condition for reaching the header via a pure execution from $\progpta$.
We compute \fpcs by backwards propagation through statements and basic blocks.
Let $\fpcafter{s}$ be the \fpc immediately after statement $s$,
let $\fpcbefore{s}$ be the \fpc immediately before statement $s$,
let $\fpcbefore{B}$ be the \fpc at the beginning of block $B$, and
let $\fpcafter{B}$ be the \fpc at the end of block $B$.

For each statement $s$, we compute $\fpcbefore{s}$ as $\fpcafter{s} \land g$,
where $g$ is the condition under which $s$ does not update a global variable.
For instance, $g$ is $\false$ for stores, $\true$ for loads,
$\areg{} = 0$ for an atomic add of form \stmt{\xvar{} +:= \areg{}}, 
$\mbox{\areg{}} = \mbox{\breg{}}$ for
an atomic exchange of form \stmt{\breg{} := xchg(\xvar{},\areg{})},
  and
$\mbox{\creg{}} = 1$ for an atomic compare-exchange of form \stmt{\creg{} := cmpxchg(\xvar{},\areg{},\breg{})}.

\fpcs for basic blocks are computed as follows. First, for an
edge with condition $g$ from a block $A$ in the loop to a block $B$,
let $\fpcedge AB$ be the \fpc along that edge, defined as follows;
\begin{itemize}
  \item if $B$  is outside the loop, then $\fpcedge AB = \pcfalse$,
  \item if $B$ is the header block, then
    if $B$ is \impureheader along $(A,B)$, then $\fpcedge AB = \pcfalse$, otherwise $\fpcedge AB = \pclit{g}$.
\item if $B$ is inside the loop, then $\fpcedge AB = \pcfalse$ if the edge
    from $A$ to $B$ is an internal backedge $(A,B)$, otherwise $\fpcedge AB = \pclit{\fpcbefore{B} \wedge g}$,
\end{itemize}

We propagate \fpcs backwards through basic blocks by the above rules for statements.
We then compute the \fpc at the end of a block $A$ with outgoing arcs to $B_1,
\ldots , B_k$ as $\fpcafter{A} = \bigmeet_{i=1}^k \fpcedge{A}{B_i}$.
We can thereafter calculate \fpcs for basic blocks by starting from the edges
that leave the loop or go back to its header.
Cycles in the control flow graph are no issue, since the \fpc of a backedge
$(A,B)$ does not depend on~$B$.
In \cref{fig:plp-branched-after}, we can see the \fpcs computed by the analysis
on the example.

After the analysis, we insert \assume statements. Given a purity condition of
form $c_1 \meet c_2 \meet \cdots \meet c_n$, for each $c_i$ we insert an
$\assume(\neg c_i)$ at the earliest point
that is textually after the definitions of
all registers in $c_i$.
For registers that do not reach the insertion location, arbitrary values can
be used when execution does not pass their definitions.
Moreover, if any memory access along the path corresponding to $c_i$
cannot be statically determined not to segfault, we must not insert $c_i$
before that memory access. For this purpose, we associate an optional ``earliest
insertion point'' with every $c_i$ in each \fpc computed by the analysis.
Finally, to exclude paths that took some internal backedge, a ``took
internal backedge'' boolean register is introduced, computed by phi-nodes,
and included in the conjunction $c_i$.

\Cref{thm:plp-safety}, whose proof appears in
\iffwithappendix{\cref{app:plp-proof}}{the
  extended version~\cite{godot-extended@arXiv-22} of this paper},
states two essential properties of PLP. These properties intuitively say that
PLP removes pure executions while preserving relevant correctness properties.
If $\thstate$ is a local state occurring in a loop $\loopvar$ of a thread
$p$, we say that $\loopvar$ is \emph{unavoidably pure} from $\thstate$ to
denote that whenever thread $p$ is in local state $\thstate$ during an
execution, then $p$ is in the process of completing a pure iteration of
$\loopvar$.

\begin{theorem}
  \label{thm:plp-safety}
  \label{thm:plp-optimality}
  Let $\prog'$ be the program resulting from applying PLP to $\prog$. Then
  $\prog'$ satisfies the following properties.
  \begin{enumerate}
  \item \emph{Local State Preservation:} \label{thm:plp-safety-prop}
    each local state $\thstate$ of a thread $p$ which is reachable in $\prog$
    is also reachable in $\prog'$, provided no loop of $p$ is unavoidably pure
    from $\thstate$.
  \item \emph{Pure Loop Elimination:}\label{thm:plp-optimality-prop}
    no execution of $\prog'$ exhibits a completed pure loop iteration of some
    thread.
  \end{enumerate}
\end{theorem}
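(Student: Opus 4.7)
The plan is to prove the two claims separately. Part (2) (Pure Loop Elimination) rests on soundness of the $\assume$ insertion with respect to the FPC analysis; Part (1) (Local State Preservation) uses a simulation argument that skips over pure iterations.

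For Part (2), I would first establish the key soundness lemma for the FPC analysis: at any program point $\progpta$ visited during an iteration of a loop $\loopvar$, if some disjunct $c_i$ of the FPC at $\progpta$ is defined-true, then the remainder of the iteration completes purely. This is proved by induction on the backwards propagation rules. For a statement $s$, the rule $\fpcbefore{s} = \fpcafter{s} \wedge g$ ensures that if $c_i$ holds before $s$ then the global-update-guard $g$ holds (so $s$ touches no global) and, using SSA, $c_i$ remains defined-true after $s$, so the inductive hypothesis applies. The edge rules set $\fpcedge{A}{B} = \pcfalse$ precisely on internal backedges, loop-exit edges, and impure header edges, so no disjunct survives through those; and $\fpcafter{A}$ is the disjunction over $\fpcedge{A}{B_i}$, so a defined-true disjunct at the end of $A$ identifies a specific surviving pure path. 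Given this lemma, for any pure iteration I would trace the executed path and show that a specific disjunct $c_i$ of the PC becomes defined-true at its $\assume$ insertion point: that point is the earliest location textually after all registers used in $c_i$ are defined; execution along this pure path passes through all their definitions (those registers being precisely the ones occurring in the edge guards and $g$'s accumulated along the path) and assigns them values that make $c_i$ true by the construction of the analysis. Hence the inserted $\assume(\neg c_i)$ blocks in $\prog'$, contradicting the assumed completion of the pure iteration.

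For Part (1), given a reachable local state $\thstate$ with no loop of $p$ unavoidably pure from $\thstate$, the negation of the \emph{unavoidably pure} condition yields an execution $E$ of $\prog$ that reaches $\thstate$ with $p$ \emph{not} in the process of completing a pure iteration. I would construct $E'$ from $E$ by iteratively removing completed pure iterations of $p$'s loops; by the definition of purity (no modification of globals or of locals used after the iteration, and no internal backedges), these removals preserve the observable state of every thread, so $E'$ still reaches $\thstate$. I then simulate $E'$ in $\prog'$ with the same scheduling and branching choices. At each $\assume$ insertion point reached during $E'$, the iteration in question is either impure or ends before completing a pure iteration, so by the contrapositive of the soundness lemma no disjunct $c_i$ is defined-true there; for any registers in $c_i$ whose definitions were not passed along the executed path, the nondeterministic values permitted by the transformation can be chosen to make $c_i$ false. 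Hence no inserted $\assume$ blocks, yielding a corresponding execution of $\prog'$ that reaches $\thstate$.

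The main obstacle is the inductive soundness argument for the FPC analysis in Part (2), specifically the interaction between the DNF structure of FPCs, the ``defined-true'' semantics (which quantifies universally over undefined registers), and SSA form. Formalising the correspondence between an executed pure path and a specific disjunct $c_i$ that becomes defined-true exactly at its insertion point --- including the ``earliest insertion point'' and segfault-safety refinements, and the treatment of paths that took an internal backedge via the auxiliary boolean register --- is where most of the care is needed; the simulation argument in Part (1) becomes largely routine once the soundness lemma is in hand.
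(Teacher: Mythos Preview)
Your overall strategy matches the paper's: Part~1 by removing pure iterations and simulating in~$\prog'$, Part~2 by showing a blocking $\assume$ must lie on any completed pure iteration. Two points deserve comment.

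In Part~1 you remove only the completed pure iterations of thread~$p$ before simulating. But the inserted $\assume$s of \emph{other} threads can also fail: if some thread~$q$ still has a completed pure iteration in~$E'$, or if $q$'s final state in~$E'$ is itself in an unavoidably pure loop, then $q$ blocks in~$\prog'$ and the step-for-step simulation breaks. The paper removes completed pure iterations of \emph{all} threads and additionally strips the tails of threads other than~$p$ whose final states are unavoidably pure; since the removed events touch no globals, $p$ still reads the same values and reaches~$\thstate$. Your argument needs the same two-stage removal.

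For Part~2 your decomposition differs from the paper's. You prove a forward soundness lemma (FPC defined-true $\Rightarrow$ pure continuation) and then, given a pure path, trace it to locate a disjunct~$c_i$ of the PC that is defined-true at its insertion point. The paper instead runs a single backward induction along the executed pure path, maintaining the invariant that the FPC at the current point, were it the loop's PC, would cause a false-evaluating $\assume$ to be inserted on the path; it starts from the backedge (FPC~$=\pclit{g}$ with $g$ true) and checks each transfer rule preserves this. Both routes work, but the paper's backward invariant packages one issue you will have to confront explicitly: after DNF simplification the ``path disjunct'' you trace need not appear verbatim in the PC, so you must argue that whatever simplified disjunct absorbs it mentions no new registers and hence still has its insertion point on the executed path. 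The paper handles this by explicitly restricting simplification to two rules (merging $a\wedge p$ with $a\wedge\neg p$, and replacing two same-register terms where one implies the other) and folding the argument into the block-join case of the induction.
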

We remark that in the definition of pure loop iterations, we assume possibly
conservative characterisations of
``global variable'' and ``local variable that may be used after the end of the iteration''
that can be determined by a standard syntactical analysis of the program, and hence
used in the PLP analysis.

\section{The \gbalg Algorithm}
\label{sec:algorithm}

In this section, we present \emph{\gbalg}, a DPOR algorithm for programs with \await statements, which is
both correct and optimal. Given a terminating program on given input,
it explores exactly one maximal execution in each equivalence class induced by the
equivalence relation $\mtequiv$.

\subsection{Happens-Before Ordering and Equivalence}
DPOR algorithms are based on a partial order on the events in each execution.
Given an execution $\exseq$ of a program $\prog$,
an \emph{event} of $\exseq$ is a  particular execution step  by a single thread;
the $i$'th event by thread $p$ is identified by the tuple $\tuple{p,i}$, and
$\procof{\event}$ denotes the thread $p$ of an event $\event = \tuple{p,i}$.
Let $\domof{\exseq}$  denote the set of events in $\exseq$.
We define a {\em happens-before relation} on $\domof{\exseq}$, denoted $\happensbefore{\exseq}$,
as the smallest transitive relation such that
$\event \happensbefore{\exseq} \event'$ if
$\event$ occurs before $\event'$ in $\exseq$, and either
\begin{itemize}
\item[(i)]
  $\event$ and $\event'$ are performed by the same thread,
  $\event$ spawns the thread which performs $\event'$, or
  $\event'$ joins the thread which performs $\event$, or
\item[(ii)]
  $\event$ and $\event'$ access a common shared variable \xvar, at least one of them writes to \xvar, and they are not both atomic fetch-and-add operations.
\end{itemize}
Note that the last condition makes atomic fetch-and-add operations on the same shared variable
independent.
It follows that $\happensbefore{\exseq}$ is a partial order on $\domof{\exseq}$.
We define two executions, $\exseq$ and $\exseq'$, as equivalent, denoted $\exseq \mtequiv \exseq'$,
if they induce the same happens-before relation on the same set of events,
(i.e., $\domof{\exseq} = \domof{\exseq'}$ and $\happensbefore{\exseq} = \happensbefore{\exseq'}$).
If $\exseq \mtequiv \exseq'$, then all variables are modified by the
same sequence of statements, implying that each thread runs through the same
sequence of local states in $\exseq$ and $\exseq'$.

\subsection{The Working of the \gbalg Algorithm}
\label{sec:algorithm-code}

\gbalg is shown in \cref{alg:dpor-alg-blocking2}.
It performs a depth-first exploration of executions using the recursive procedure
$\exploregb(\exseq)$, where $\exseq$ is the currently explored execution,
which can also be interpreted as the stack of the depth-first exploration.
In addition, for each prefix $\exseq'$ of $\exseq$, the algorithm maintains
\begin{itemize}
\item a \emph{sleep set} $\sleepattr(\exseq')$, i.e., a set of threads that should not be explored from $\exseq'$,
  for the reason that each extension of form $\exseq'.p$ for $p \in \sleepattr(\exseq')$ is equivalent to a
  previously explored sequence,
\item a \emph{wakeup tree} $\wut(\exseq')$, i.e., an ordered tree $\tuple{B,\prec}$, where
$B$ is a prefix-closed set of sequences, whose leaves are called \emph{wakeup sequences}, and
$\prec$ is the order in which sequences were added to $\wut(\exseq')$.
For each $w \in B$ the sequence $\exseq'.w$
  will be explored during the call $\exploregb(\exseq')$ in the order given by $\prec$.
\end{itemize}
All previously explored sequences together with the current wakeup tree (i.e.,
all sequences of form $\exseq'.w$ for $w \in \wut(\exseq')$ and a prefix $\exseq'$ of $\exseq$)
form the current execution tree, denoted $\exseqs$.
The branches of $\exseqs$ are ordered
by the order in which they were added to the tree.
Note that the recursive call to
$\exploregb(\exseq)$ may insert into $\wut(\exseq')$ for prefixes $\exseq'$ of $\exseq$.

Let $v \setminus p$ denote the sequence $v$ with the first occurrence of an
event by thread $p$ (if any) removed.
Let $\nextof{\exseq}{p}$ denote the next event performed by thread $p$ after $\exseq$.
Two important concepts are \emph{races} and \emph{weak initials}.

\begin{definition}[Non-Blocking Races]
\label{def:races}
Let $\event, \event'$ be two events in different threads in an execution $\exseq$, where $\event$ occurs before $\event'$.
Then  $\event$ and $\event'$ are in
a \emph{non-blocking race}, denoted $\event \mayreverserace{\exseq} \event'$, if
\begin{inparaenum}[(i)]
  \item $\event$ and $\event'$ are adjacent in $\hbefore{\exseq}$
    (i.e., $e \hbefore{\exseq} e'$, and for no other event $e''$ we have $e \hbefore{\exseq} e'' \hbefore{\exseq} e'$), and
  \item
    $\event'$ cannot be enabled or disabled by an event in another thread.
\end{inparaenum}
\qed
\end{definition}

\begin{definition}[Weak Initials]
\label{def:initials}
For an execution $\exseq.w$, the set of \emph{weak initials} of $w$ (after $\exseq$),
denoted $\wfirsttrans{\exseq}{w}$, 
is the set of threads $p$ such that 
$\exseq.w\mtequiv \exseq.p.(w \setminus p)$ if $p$ is in $w$, and
$\exseq.w.p \mtequiv \exseq.p.w$ if $p$ is not in $w$.
\qed
\end{definition}
Intuitively, $p \in \wfirsttrans{\exseq}{w}$ if $\nextof{\exseq}{p}$ is independent with
all events that precede it in $w$ in the case that $p$ is in $w$, otherwise with all events in $w$.
If $p \in \wfirsttrans{\exseq}{w}$ we say that $w$ is \emph{redundant} wrt.\ $E.p$,
since some extension of $E.w$ is equivalent to some extension of $E.p$.
An important property of the execution tree $\exseqs$ that is
maintained by the algorithm is that
an extension $w$ of an existing sequence $\exseq$ is added only if $\exseqs$ does not contain
an execution of form $E'.p$ such that $E'$ but not $E'.p$ is a prefix of $E$,
and $w'.w$ is redundant wrt.\ $E'.p$, 
where $\exseq'$ is defined by $\exseq = \exseq'.w'$.

For the \gbalg algorithm, we define
\begin{itemize}
\item $\pre(\exseq, \event)$ as the prefix of $\exseq$ up to but not including
  $\event$,
\item $\notsucc{\event}{\exseq}$ as the subsequence of $\exseq$ of events that occur
  after $\event$ but do not happen-after $\event$.
\item $u \infirstseqs{\exseq} w$ to denote that $\exseq.u.v \mtequiv \exseq.w$ for some $v$; intuitively $u$ is a ``happens-before prefix'' of $w$.
\end{itemize}

\begin{figure}
  \newcommand{\lIfElse}[3]{\lIf{#1}{#2 \textbf{else}~#3}}
  \SetKwBlock{Forever}{forever do}{end}
  \removelatexerror \begin{algorithm}[H]
  \SetInd{0.3em}{0.5em}
  \Initial{$\exploregb(\emptyseq)$~with~$\wut(\emptyseq)=~\emptytree$,~$\sleepattr(\emptyseq)=~\emptyset$}
  \BlankLine
  \Fn{$\exploregb(\exseq)$}{
  \If{$\enabledafter{\exseq} = \emptyset$ }
  {
    \ForEach{$\event,\event' \in \domof{\exseq} \ \keyword{such that} \ (\event
      \mayreverserace{\exseq} \event')$}{
       \label{algl:optimal-blocking2-race-begin}
       \label{algl:optimal-blocking2-nbrace-begin}
      $\keyword{let} \ \exseq' = \pre(\exseq,\event)$\; \label{algl:optimal-blocking2-nbrace-Eprime}
\keyword{let} \mbox{$v = (\notsucc{\event}{\exseq}.\procof{\event'})$}\;  \label{algl:optimal-blocking2-nbrace-v}
      \lIf{$\sleepattr(\exseq') \cap \wfirsttrans{\exseq'}{v} = \emptyset$}{\label{algl:optimal-blocking2-nbrace-if}$\insertseq{\exseq'}{v}{\wut(\exseq')}$
      }\label{algl:optimal-blocking2-nbrace-insert}
    } \label{algl:optimal-blocking2-nbrace-end}
    \ForEach{$\tuple{e', \exseq'} \in (\{\langle\nextof{E}{p}, \exseq \rangle | \! $ {\rm $p$ is blocked after $\exseq$}$\} $  \label{algl:optimal-blocking2-brace-begin} \\
      $\qquad   \quad \cup\; \{\tuple{\event', \pre(\exseq,\event')} | $ {\rm $e'$ is in $E$ and may block}$\})$}{ \label{algl:optimal-blocking2-test-disables}
$\canstop := \false$\;
      \ForEach{\rm $\event$ in $\exseq'$ (starting from the end)\\
        $\qquad\qquad$ that may enable or disable $e'$ \label{algl:optimal-blocking2-foreach-e} }{
      $\keyword{let} \ \exseq'' = \pre(\exseq,\event)$\;
        $\keyword{let} \ w = \notsucc{\event}{\exseq}$\;
        \lIf{$\event$ {\rm conflicts with all events that may\\
            $\quad\;\;$ enable or disable} $e'$}{$\canstop := \true$
        }
        $\didinsert := \false$\;
        \ForEach{{\rm maximal subsequence} $u$ of $w$ \keyword{such that}\\
          $\quad\;$ $u \infirstseqs{\exseq''} w$ \keyword{and} {\rm $e'$ is enabled after $\exseq''.u$}}{ \label{algl:optimal-blocking2-subsequence-u}
$\didinsert := \true$\;
          $\keyword{let} \ v = u.\procof{e'}$\;\label{algl:optimal-blocking2-brace-v}
          \lIf{$\sleepattr(\exseq'') \cap \wfirsttrans{\exseq''}{v} = \emptyset$ \label{algl:optimal-blocking2-brace-if}}{$\insertseq{\exseq''}{v}{\wut(\exseq'')}$
          }\label{algl:optimal-blocking2-brace-insert}
        }
        \lIf{$\canstop$ \keyword{and} $\didinsert$}{\keyword{break}}
      }
    } \label{algl:optimal-blocking2-race-end}
  }
  \Else{
\If{$\wut(\exseq) = \emptytree$ }{\label{algl:optimal-blocking2-explore-begin}$\keyword{choose}\; p \in \enabledafter{\exseq}$\;
      $\wut(\exseq) := \tuple{\set{p},\emptyset}$\;\label{algl:optimal-blocking2-wutinit-fresh}
    }  \label{algl:optimal-blocking2-wutinit-end}
\While{$\exists p \in \wut(\exseq)$}{
      $\keyword{let}\; p = \min_{\prec}
               \{ p \in \wut(\exseq) \}$\;\label{algl:optimal-blocking2-leftchild}
      $\sleepattr(\exseq.p) := \lbrace q \in \sleepattr(\exseq) \mid p, q $ {\rm independent after} $E \rbrace $\; \label{algl:optimal-blocking2-sleepset-update}
      $\wut(\exseq.p) := \subtreeafter{p}{\wut(\exseq)}$\; \label{algl:optimal-blocking2-wut-update}
      $\exploregb(\exseq.p)$\;\label{algl:optimal-blocking2-reccall}
      {add $p$ to $\sleepattr(\exseq)$}\;
      {remove all sequences of form $p.w$ from $\wut(\exseq)$}\;
    }  \label{algl:optimal-blocking2-explore-end}
  }
  }
  \Fn {$\insertseq{\exseq'}{v}{\wut(E')}$}{ \label{alg:wut-insert-begin}
    $u := \emptyseq$ \\
\keyword{let} c be the list of children of $u$ in $\wut(E')$ from left to right\;\label{alg:wut-insert-forever}
      \ForEach{\mbox{\rm sequence $u.p$ in $c$}}{
        \label{algl:wut-foreach-child}
        \If{$p \in \wfirsttrans{\exseq'.u}{v}$}{\label{algl:wut-if-in-wi}
          \lIf{$p \not\in v$ \keyword{or} $(v:= v\setminus p) = \emptyseq$}{\Return{}} \label{algl:wut-insert-exhaust}
$u := u.p$\; \label{algl:wut-insert-descend}
\lIf{$u$ \rm{is a leaf of $\wut(E')$}}{\Return{}} \label{algl:wut-insert-restart}
          \keyword{goto} \cref{alg:wut-insert-forever} \label{algl:wut-insert-loop-end}
        }
      }
      add $v$ as a new rightmost descendant of $u$ in $\wut(E')$\label{algl:wut-insert-insert} \\
      \Return{} \label{alg:wut-insert-end}
}
  \caption{\gbalg}
  \label{alg:dpor-alg-blocking2}
  \end{algorithm}
  \vspace{\afteralgorithmspace}
\end{figure}

The algorithm runs in two phases:
race detection
\mbox{(\crefrange{algl:optimal-blocking2-race-begin}{algl:optimal-blocking2-race-end})}
and exploration
(\crefrange{algl:optimal-blocking2-explore-begin}{algl:optimal-blocking2-explore-end}).
Exploration picks the next unexplored leaf of the exploration tree and extends it
with arbitrary scheduling to a maximal execution. This leaf
is reached step-by-step: at each step, the current execution
$\exseq$ is extended by the leftmost child of the root of $\wut(\exseq)$ and used in a
recursive call to $\exploregb$
(\crefrange{algl:optimal-blocking2-leftchild}{algl:optimal-blocking2-reccall}) in order to
perform the next step.
If $\wut(\exseq)$ only contains the empty sequence, an arbitrary thread is chosen for
the next step and added to $\wut(\exseq)$ (\cref{algl:optimal-blocking2-wutinit-fresh}).
This step-by-step extension of the current execution is continued until a maximal execution is
reached. At each step,
the new sleep set after $\exseq.p$ is constructed by taking the elements of
$\sleepattr(E)$ that are independent with $p$.
After a recursive call to $\exseq.p$, the subtree rooted at $\exseq.p$ can be removed from
the wakeup tree. To remember that we should not attempt to explore any
sequences that are redundant wrt.\ $E.p$, we add $p$ to $\sleepattr(E)$.

The race detection phase is entered when the explored sequence $\exseq$ is maximal.
There we examine $\exseq$ for races and construct new non-redundant executions.
We distinguish between two types of races:
non-blocking races, such as between a write and a read, handled on
\crefrange{algl:optimal-blocking2-nbrace-begin}{algl:optimal-blocking2-nbrace-end},
and blocking races, such as involving an \await event, handled on
\crefrange{algl:optimal-blocking2-brace-begin}{algl:optimal-blocking2-race-end}.

For each non-blocking race $\event \mayreverserace{\exseq} \event'$, we let $E'$
be the prefix of $E$ that precedes $e$, and construct a wakeup sequence $v$ by appending
$\procof{e'}$ to the subsequence of events that occur after $e$ in $E$ but do
not happen-after $e$ (\cref{algl:optimal-blocking2-nbrace-v}).
By construction, the sequence $E'.v$ is an execution. Moreover
$\procof{e} \not\in \wfirsttrans{\exseq'}{v}$ since the occurrence of $e'$ in
$v$ does not happen-after $e$. Thus, $v$ is non-redundant wrt.\ $E'.\procof{e}$.
If $v$ is also non-redundant wrt.\ $E'.p$ for each $p \in \sleepattr(E')$,
then $v$ is inserted into the wakeup tree at $E'$,
extending $\wut(E')$ with a new leaf if necessary.

Races involving events that can be blocked are handled
at~\crefrange{algl:optimal-blocking2-brace-begin}{algl:optimal-blocking2-race-end}.
For each such event $e'$, we extract the prefix $E'$
that precedes $e'$. Then, for each $e$ in $E'$ that potentially conflicts with
$e'$, we extract the prefix $E''$ preceding $e$ and the sequence $w$ of events
that does not happen-after $e$. For each maximal happens-before prefix $u$ of
$w$ after which $e'$ is enabled, we construct a wakeup sequence $v$ as
$u.\procof{e'}$ (\cref{algl:optimal-blocking2-brace-v}), which is checked for
redundancy and possibly inserted into the wakeup tree in the same way as for a
nonblocking race.
Such prefixes
can be enumerated by
recursively removing the suffix of one event that may enable or disable $e'$ at
a time, stopping
whenever $e'$ is enabled by the current prefix.
As an optimisation, implemented by the flags $\canstop$ and $\didinsert$,
once the algorithm has found a wakeup sequence that enables $e'$ before some
event that conflicts with every event that may enable or disable $e'$,
it needs not consider reversing $e'$ with even earlier events $e$, as those
reversals will be considered in a later recursive call.

The function $\insertseq{\exseq}{v}{\wut(E')}$ for inserting a sequence $v$ into a wakeup tree
$\wut(E')$ is shown in \crefrange{alg:wut-insert-begin}{alg:wut-insert-end}.
Starting from the root, represented by the empty sequence, it traverses $\wut(E')$ downwards (the current point being $u$), always descending (\cref{algl:wut-insert-descend}) to the leftmost child $u.p$ such that $p$ is a weak initial of the remainder of $v$ until either
\begin{inparaenum}[(i)]
\item arriving at a leaf indicating that $v$ was redundant to begin with and $\wut(E')$ can be left unchanged (\cref{algl:wut-insert-restart}),
\item
  encountering a $p$ which is not in $v$, or exhausting $v$ (\cref{algl:wut-insert-exhaust}), or
\item
  arriving at a node with no child passing the test at~\cref{algl:wut-if-in-wi}, and then adding the
  remainder of $v$ as a new leaf (\cref{algl:wut-insert-insert}), since it was shown to be non-redundant.
\end{inparaenum}

Algorithm \gbalg is correct and optimal in the sense that it explores exactly
one maximal execution in each equivalence class, as stated in the following
theorem
\iffwithappendix{whose proof is in \cref{sect:correctness-final}}{whose proof is
  in the extended version of this paper~\cite{godot-extended@arXiv-22}}.
\begin{theorem}
\label{thm:correct-optimal}
For a terminating program $\prog$, \gbalg has the properties that
\begin{inparaenum}[(i)]
\item for each maximal execution $E$ of $\prog$, it explores some
  execution $E'$ with $E' \mtequiv E$, and
\item it never explores two different but equivalent maximal executions.
\end{inparaenum}
\end{theorem}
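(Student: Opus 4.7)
The plan is to follow the structure of the correctness proof for Optimal-DPOR from Abdulla et al., adapting it to handle (a) the weakened conflict relation that makes atomic fetch-and-adds on the same variable independent, and (b) the new branch at \crefrange{algl:optimal-blocking2-brace-begin}{algl:optimal-blocking2-race-end} that constructs wakeup sequences out of blocking races. First I would formulate and prove two invariants maintained by every call to $\exploregb(\exseq)$. \emph{Sleep-set coverage (A):} whenever $p \in \sleepattr(\exseq)$, every maximal execution of the form $\exseq.p.w$ is $\mtequiv$-equivalent to one that has already been fully explored earlier in the depth-first traversal. \emph{Wakeup-tree completeness (B):} for every maximal execution $\exseq.w$ whose first event is by a thread outside $\sleepattr(\exseq)$, either $w$ is equivalent to an execution already explored or some prefix of $w$ (up to $\mtequiv$) is a branch of $\wut(\exseq)$. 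Both clauses of the theorem then fall out: (i) from (B) applied at the root, since the loop at \crefrange{algl:optimal-blocking2-explore-begin}{algl:optimal-blocking2-explore-end} drives every branch of $\wut(\exseq)$ into a recursive call, and (ii) from (A), since $p$ is moved into $\sleepattr(\exseq)$ immediately after $\exploregb(\exseq.p)$ returns and the corresponding subtree of $\wut(\exseq)$ is removed.

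Preservation of (A) and (B) under a single recursive call is proved by splitting the race-analysis phase into its two cases. For the non-blocking case I would verify, essentially as in Abdulla et al., that for each race $\event \mayreverserace{\exseq} \event'$ the sequence $v = \notsucc{\event}{\exseq}.\procof{\event'}$ is a valid execution after $\pre(\exseq,\event)$, that $\procof{\event} \notin \wfirsttrans{\pre(\exseq,\event)}{v}$ so $v$ is non-redundant wrt.\ the branch already containing $\procof{\event}$, and that the sleep-set check at \cref{algl:optimal-blocking2-nbrace-if} rules out redundancy wrt.\ previously abandoned branches. A supporting lemma shows that $\insertseq$ correctly interprets the partial order: the descent at \cref{algl:wut-insert-descend} follows an existing branch exactly when the remainder of $v$ agrees with it up to happens-before, and the early returns at \cref{algl:wut-insert-exhaust,algl:wut-insert-restart} correctly detect that $v$ is already covered. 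The weaker FAA conflict appears only in the definition of $\happensbefore{\exseq}$; since $\mtequiv$ is defined directly from this relation and, as noted after \cref{def:initials}, still forces each variable to undergo the same sequence of writes, none of the Optimal-DPOR arguments require structural change.

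The main obstacle, and the essential novelty, is the blocking-race branch. A maximal execution of the alternative equivalence class may differ from $\exseq$ not by a single reversal but by suppressing several events that collectively enable the blocking, as in \cref{fig:optimalgb-example} where two of the three FAAs must be scheduled after~$s_1$. I would prove that enumerating the \emph{maximal} happens-before prefixes $u$ of $w$ after which $\event'$ is enabled (\cref{algl:optimal-blocking2-subsequence-u}) is both sound---each $\pre(\exseq,\event).u.\procof{\event'}$ extends to a genuine execution non-redundant wrt.\ everything already in $\exseqs$---and complete: for every equivalence class witnessing $\event'$ in place of the blocking, some choice of $\event$ in $\exseq'$ (possibly several steps earlier than a direct conflict) together with some maximal $u$ yields a wakeup sequence in that class. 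Completeness hinges on the observation that any alternative execution enabling $\event'$ must, when read backwards from $\event'$, first disagree with $\exseq$ at some event that may enable or disable~$\event'$, and that the survivors from $\exseq$ not happening-after that event form exactly such a $u$. Soundness of the $\canstop/\didinsert$ optimisation requires a separate lemma: once the current $\event$ conflicts with every en/dis-abler of $\event'$, any blocking race involving $\event'$ with a strictly earlier event will be rediscovered from within the branch just inserted, so the outer loop may safely break.
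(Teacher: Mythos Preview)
Your plan has the right overall shape and correctly identifies the blocking-race branch as the essential new ingredient, but the invariant-based decomposition you propose does not match how \gbalg actually works and leaves a genuine gap.  Invariant~(B), that some $\mtequiv$-prefix of each uncovered $w$ sits in $\wut(\exseq)$, cannot be maintained as a precondition of $\exploregb(\exseq)$: wakeup trees are populated \emph{during} the exploration of descendants (race detection runs only at maximal leaves), so at the moment $\exploregb(\exseq)$ is first invoked, $\wut(\exseq)$ may be empty or contain a single branch.  At the root, (B) would then be vacuously false.  The paper avoids this by inducting over the post-order $\propto$ in which calls \emph{return}, proving directly that when $\exploregb(\exseq)$ returns, every maximal $\exseq.w$ has an equivalent explored representative; the proof is by contradiction on an unexplored $\exseq.w$.

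The more substantive gap is in your completeness argument for the blocking case.  You argue at the level of ``for every alternative equivalence class there exist $e$ and a maximal $u$ producing a wakeup sequence in it,'' but race detection only ever runs on a concrete maximal execution that the algorithm has actually explored.  You must therefore first exhibit such an execution.  In the paper, this is done by choosing $q\in\finalsleep(\exseq)$ with longest prefix $w_q$ independent of $q$, and then applying the inductive hypothesis to the earlier-returning call $\exploregb(\exseq_q'.q)$ to obtain an explored $\exseq_q'.q.z \mtequiv \exseq_q'.q.w_q'.w_q.z'$ in which the conflict between $\nextof{\exseq_q'}{q}$ and $e_q$ is visible.  Only then does the blocking-race code run and construct $u$; and the crucial step $u \mtequivafter{\exseq_q'} w_q'.w_q$ relies on having additionally chosen $w$ (among all unexplored witnesses) so that the resulting $w_q$ is as long as possible, since otherwise $u.e_q$ would itself be an unexplored witness with a longer independent prefix.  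Your ``read backwards until the first disagreement'' heuristic does not supply either of these ingredients.  The optimality part~(ii) is close to the paper's argument (Lemma on non-redundant insertion plus a two-line contradiction), but there too the paper's statement is about the execution tree $\exseqs$ rather than a sleep-set coverage invariant.
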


\section{Implementation and Evaluation} \label{sec:eval}
We have implemented our techniques on top of the \Nidhugg tool.
\Nidhugg is a state-of-the-art stateless model checker for C/C++ programs
with Pthreads, which works at the level of LLVM Intermediate Representation
(IR), typically produced by the Clang compiler.
We have added our PLP analysis and transformations, as well as the rewrite
from load-assume, exchange-assume, and compare-exchange-assume pairs into
load-await and exchange-await, as passes over LLVM IR.
\Nidhugg comes with a selection of SMC algorithms. One of them is Optimal-DPOR,
which we have used as a basis for our implementation of \algname including IFAA,
the optimisation of treating \mbox{fetch-and-add} instructions to the same
memory location as independent.
All the techniques in this paper are now included in upstream \Nidhugg and are
enabled when giving the \texttt{--optimal} flag.

\subsection{Overall Performance}
First, we evaluate our technique and compare its performance against
baseline \Nidhugg and the \Saver~\cite{SAVER@FMCAD-21} technique, implemented
in a recent version of \GenMC~\cite{GenMC@CAV-21}.
\Saver has a similar goal to our PLP transformation, but tries to identify
pure loop iterations dynamically, aborting threads if they perform a pure loop
iteration. \Saver's approach does not allow further rewrite with \await{}s.

For our evaluation, we used a set of real-world benchmarks similar to those
used by the \Saver~\cite{SAVER@FMCAD-21} paper. We note that all atomic memory
accesses in these benchmarks have been converted to SC, as this is the only
common memory model that both tools support.
Where relevant, benchmarks are ran with the same loop bound as in the \Saver
paper. For most benchmarks, this is one greater than the number of threads.
After the benchmark name, the number of threads are shown in parentheses.
Benchmarks \bench{mcslock}, \bench{qspinlock} and \bench{seqlock} are tests of
data structures from the Linux kernel.
Benchmarks \bench{ttaslock} and \bench{twalock} are mockups based on, but not
the same as, the benchmarks in the \Saver paper, because its authors were not
at liberty to share the original benchmark sources. Both are tests of locking
algorithms.
Benchmark \bench{mpmc-queue} tests a multiproducer-multiconsumer queue
algorithm, \bench{linuxrwlocks} tests a readers-writers lock
algorithm, \bench{treiber-stack} tests a lock-free stack algorithm, and
\bench{ms-queue} tests a lock-free queue.
Benchmarks \bench{mutex} and \bench{mutex-musl} test two mutex algorithms, the
second one used in the musl C standard library implementation.
Benchmark \bench{sortnet} is an extended version of the concurrent sort
program from \cref{fig:prodcons}.
In this version, the sorting networks are generated using
\href{https://en.wikipedia.org/wiki/Batcher_odd\%E2\%80\%93even_mergesort}{Batcher's odd-even mergesort}.
The number of elements sorted is twice the number of threads, so
\bench{sortnet(6)} sorts 12 elements. In our replication
package~\cite{godot-artifact@FMCAD-22}, all the tools and benchmarks are
provided, as well as scripts that can replicate the tables in this section.

\pgfplotstableset{columns/benchmark/.style={string type,
    string replace*={_}{\_},
  },
  columns/tool/.style={string type},
  columns/ref_traces/.style={column name=\multicolumn{1}{c}{Baseline}},
  columns/gref_traces/.style={column name=\multicolumn{1}{c}{Baseline}},
  columns/saver_traces/.style={column name=\multicolumn{1}{c}{\Saver}},
  columns/nplp_traces/.style={column name=\multicolumn{1}{c}{PLP}},
  columns/nplpa_traces/.style={column name=\multicolumn{1}{c}{PLP+Await}},
  columns/nall_traces/.style={column name=\multicolumn{1}{c}{\dots+IFAA}},
  every head row/.style={before row={\toprule
      & \multicolumn{1}{c}{\GenMC} & \multicolumn{4}{c}{\Nidhugg} \\
      \cmidrule(lr){2-2} \cmidrule(lr){3-6}
    },after row=\midrule},
  every last row/.style={after row=\bottomrule},
  column type={r},
create on use/Benchmark/.style={
create col/assign/.code={\getthisrow{benchmark}\benchmark
      \getthisrow{n}\n
      \edef\entry{\benchmark(\n)}\pgfkeyslet{/pgfplots/table/create col/next content}\entry
    },
  },
  columns/Benchmark/.style={
    column type={l},
  },
  columns={Benchmark,
saver_traces,ref_traces,nplp_traces,nplpa_traces,nall_traces},
  fixed,
  string type, }

\newcommand\SZ{\scriptsize}
\newcommand\error{\textcolor{red}{\SZ \textdagger}\xspace}
\newcommand\timeout{\textcolor{blue}{\SZ \clock}\xspace}\newcommand\notavail{\textcolor{gray}{\SZ n/a}\xspace}
\newcommand\bug{\SZ\color{red} bug }

\begin{table*}[ht]
  \caption{Number of (complete+blocked) executions explored by algorithms
    implemented in \GenMC and \Nidhugg on a set of challenging benchmarks, as
    well as the execution time (in seconds) taken.
The \timeout symbol means that the exploration did not finish in 1h, and
    \error means that the tool crashed.}
  \label{tab:results:vs-saver}
  \label{tab:results:vs-saver-full}
\centering
  \setlength{\tabcolsep}{3pt}
\pgfplotstablevertcat{\output}{results/zenlaban/qspinlock.txt}
\pgfplotstablevertcat{\output}{results/zenlaban/mcs_spinlock.txt}
  \pgfplotstablevertcat{\output}{results/zenlaban/twalock.txt}
  \pgfplotstablevertcat{\output}{results/zenlaban/mutex_musl.txt}
  \pgfplotstablevertcat{\output}{results/zenlaban/mutex.txt}
\pgfplotstablevertcat{\output}{results/zenlaban/ms_queue.txt}
  \pgfplotstablevertcat{\output}{results/zenlaban/linuxrwlocks.txt}
\pgfplotstablevertcat{\output}{results/zenlaban/ttaslock.txt}
\pgfplotstablevertcat{\output}{results/zenlaban/seqlock.txt}
  \pgfplotstablevertcat{\output}{results/zenlaban/mpmc_queue.txt}
\pgfplotstablevertcat{\output}{results/zenlaban/treiber_stack.txt}
  \pgfplotstablevertcat{\output}{results/zenlaban/sortnet.txt}
  \resizebox{\textwidth}{!}{
    \pgfplotstabletypeset[
      every row no 2/.style={before row=\midrule},
      every row no 4/.style={before row=\midrule},
      every row no 6/.style={before row=\midrule},
      every row no 8/.style={before row=\midrule},
      every row no 10/.style={before row=\midrule},
      every row no 12/.style={before row=\midrule},
      every row no 14/.style={before row=\midrule},
      every row no 16/.style={before row=\midrule},
      every row no 18/.style={before row=\midrule},
      every row no 20/.style={before row=\midrule},
      every row no 22/.style={before row=\midrule},
      every row no 25/.style={before row=\midrule},
      columns/ref_traces/.style={column name=\multicolumn{1}{c}{Execs}},
      columns/gref_traces/.style={column name=\multicolumn{1}{c}{Execs}},
      columns/saver_traces/.style={column name=\multicolumn{1}{c}{Execs}},
      columns/nplp_traces/.style={column name=\multicolumn{1}{c}{Execs}},
      columns/nplpa_traces/.style={column name=\multicolumn{1}{c}{Execs}},
      columns/nall_traces/.style={column name=\multicolumn{1}{c}{Execs}},
      columns/ref_time/.style={column name=\multicolumn{1}{c}{Time}},
      columns/gref_time/.style={column name=\multicolumn{1}{c}{Time}},
      columns/saver_time/.style={column name=\multicolumn{1}{c}{Time}},
      columns/nplp_time/.style={column name=\multicolumn{1}{c}{Time}},
      columns/nplpa_time/.style={column name=\multicolumn{1}{c}{Time}},
      columns/nall_time/.style={column name=\multicolumn{1}{c}{Time}},
      every head row/.style={before row={\toprule
          & \multicolumn{4}{c}{\GenMC} & \multicolumn{8}{c}{\Nidhugg} \\
          \cmidrule(lr){2-5} \cmidrule(lr){6-13}
          & \multicolumn{2}{c}{Baseline} & \multicolumn{2}{c}{\Saver}
          & \multicolumn{2}{c}{Baseline} & \multicolumn{2}{c}{PLP}
          & \multicolumn{2}{c}{PLP+Await} & \multicolumn{2}{c}{\dots+IFAA} \\
          \cmidrule(lr){2-3}   \cmidrule(lr){4-5}
          \cmidrule(lr){6-7}   \cmidrule(lr){8-9}
          \cmidrule(lr){10-11} \cmidrule(lr){12-13}
        },after row=\midrule},
      columns={Benchmark,gref_traces,gref_time,saver_traces,saver_time,ref_traces,ref_time,nplp_traces,nplp_time,nplpa_traces,nplpa_time,nall_traces,nall_time},
    ]{\output}
  }
  \vspace{\aftertablespace}
\end{table*}

We evaluate all techniques based on the number of executions they explore.
In fact, we show this number using an addition of form $T + B$, where $T$
is the number of explored completed executions and $B$ is the number of
executions that are blocked in the sense that either an \await is deadlocked
or some thread is blocked for executing \stmt{\textbf{assume}(false)} (in
\Nidhugg) or a pure loop iteration (in \Saver). We remark that the \Saver
paper reports only the $T$ part, but, as we will see, often the number of
blocked executions is significant and outnumbers the number of explored
completed executions.  Obviously, both numbers contribute to the time an SMC
tool takes to explore these programs.
The evaluation was performed on a Ryzen 5950X running a July 2022 Arch Linux
system.

In \cref{tab:results:vs-saver},
there are four sets of \Nidhugg columns. Baseline shows the performance of
unmodified \Nidhugg/Optimal. The PLP columns shows the performance of using
unmodified \Nidhugg/Optimal together with Partial Loop Purity Elimination.
Pure loops are bounded with \assume{}s. The PLP+Await columns shows the result
of PLP and transforming \assume{}s into \await{}s, where possible. Finally,
the \dots+IFAA columns report results from when \gbalg treats atomic
fetch-and-add operations as independent.
For the two sets of \GenMC columns, the \Saver columns show the performance
of \GenMC v0.6, which implements the \Saver technique, and the Baseline
columns show the performance of \GenMC v0.5.3, which does not.
The timeout we have used for these benchmarks is 1 hour.

Starting at the top of \cref{tab:results:vs-saver}, \bench{qspinlock} is a
benchmark that does not benefit from \Saver nor PLP, but establishes that the
baseline algorithms of both tools are very similar but \GenMC is faster.
In the next four benchmarks (\bench{mcslock}, \bench{twalock}, \bench{mutex},
and \bench{mutex-musl}), both PLP and \Saver are ineffective, but \await{}s
eliminate most of the blocked traces (in \bench{mcslock}) or all of them (in
the remaining three).  Moreover, we see that IFAA is effective in 
\bench{mutex} and \bench{mutex-musl}, and manages to almost halve the total
number of executions explored.

PLP fails to identify the loop purity in \bench{ms-queue}. The
restriction on the form of purity conditions imposed by our implementation in
\Nidhugg is underapproximating the purity condition to \pcfalse. This
demonstrates a downside with doing purity analysis statically, as \Saver never
needs to represent purity conditions in order to eliminate pure loop
iterations.

In \bench{linuxrwlocks}, PLP is ineffective, because this benchmark does not
contain pure loop iterations as we have defined them. Rather, the loop
contains a pair of fetch-and-add and fetch-and-sub that cancel out, which is
called a ``zero-net-effect'' loop in the \Saver paper~\cite{SAVER@FMCAD-21}.
These are out of scope for a static analysis, as \Saver has to dynamically
undo the elimination if a read appears to have observed the intermediate
effect.
Despite the lack of PLP, \algname significantly speeds up
\bench{linuxrwlocks}.

In \bench{ttaslock}, we believe some implementation issue is preventing \Saver
from eliminating pure loop iterations.
PLP does work, however, and \await{}s eliminate all the blocked executions.

In the next three benchmarks (\bench{seqlock}, \bench{mpmc-queue} and
\bench{treiber-stack}), PLP discovers the same pure loop iterations as \Saver,
and permits a rewrite to \await{}s that significantly reduces the
search space, even by an order of magnitude for \bench{seqlock}, and on
\bench{mpmc-queue} IFAA further halves it.

Finally, \algname really shines on \bench{sortnet}. \GenMC
cannot take advantage of \await{}s, and so has to explore an exponential
number of (assume-blocked) traces, where \Nidhugg can explore the program in
just one.
Unfortunately, \GenMC v0.5.3 crashes on this benchmark, but we believe it
would yield the same numbers as \Saver, which also explores a significant
number of redundant executions.

\begin{table*}
  \caption{Number of (complete+blocked) executions that SMC algorithms
    in \GenMC and \Nidhugg explore on shortened, bug-free versions of
    \bench{\scriptsize safestack}.}
  \label{tab:results:safestack-full}
\centering
  \setlength{\tabcolsep}{2pt}
  \pgfplotstablevertcat{\output}{results/zenlaban/safestack_full.txt}
\resizebox{0.73\textwidth}{!}{
  \pgfplotstabletypeset[
    every head row/.style={before row={\toprule
        & \multicolumn{2}{c}{\GenMC} & \multicolumn{4}{c}{\Nidhugg} \\
        \cmidrule(lr){2-3} \cmidrule(lr){4-7}
      },after row=\midrule},
    columns={Benchmark,gref_traces,saver_traces,ref_traces,nplp_traces,nplpa_traces,nall_traces},
  ]{\output}
  }
  \vspace{\aftertablespace}
\end{table*}

\subsection{Effectiveness on SafeStack}
Next, we evaluate the ability of \algname to expose difficult-to-find
bugs in real-world code bases.
The benchmark we will use is called \bench{safestack}.
It was first \href{https://social.msdn.microsoft.com/Forums/en-US/91c1971c-519f-4ad2-816d-149e6b2fd916/bug-with-a-context-switch-bound-5}{posted to the CHESS forum}, and subsequently included in the
SCTBench~\cite{thomsonDB16} and SVComp benchmark suites.
The original \bench{safestack} code attempts to implement a lock-free stack
but contains an ABA bug which is quite challenging for concurrency testing and
SMC tools to find, in the sense that exposing the bug requires at least five
context switches. The test harness is also quite big, containing three threads
each performing four operations on the stack.
Let us refer to this original harness as \bench{safestack-444} to indicate
that each of its three threads performs four operations (pop, push, pop,
push). We will also use shortened versions of this harness: four versions with
just two threads, and four versions where each of the three threads performs
fewer operations.
The smallest harness that exposes the bug is \bench{safestack-331}.

We first compare the two SMC tools and their algorithms on versions of
\bench{safestack} that do not exhibit the bug and thus require exhaustive
exploration of all traces.
\Cref{tab:results:safestack-full} shows the results.
\begin{inparaenum}[]
\item First, notice that the dynamic technique that \Saver implements is
  completely or mostly ineffective in these programs;
  compare it to the baseline numbers.
\item In contrast, PLP achieves significant reduction of the set of executions
  that \Nidhugg explores.
\item Finally, both the transformation of \assume{}s to \await{}s and the IFAA
  optimisation are applicable and result in further reductions in the number
  of explored executions.
\end{inparaenum}
The number of complete traces is 0 on \bench{safestack-311} since the code
does not allow popping the last element, so all traces end up with one thread
livelocking in pop with the queue containing only one element.
For \cref{tab:results:safestack-full}, the timeout used is 10 hours.

With our next and last experiment, using \bench{safestack-331}, we can evaluate
the tools' abilities to expose the bug. Neither \GenMC, with or without \Saver,
nor baseline \Nidhugg find anything after running for more than $2\,000$ hours!
On the other hand, if we run \Nidhugg with PLP, awaits, and IFAA, it discovers
the bug in just $8$ minutes, after exploring $2+2\,453\,474$ traces. How much
of its search space an SMC tool has to search before it encounters a bug can
be up to ``luck'', so to ensure that this result is not due to luck we ``fix''
the bug by commenting out all the assertions in the benchmark and run \Nidhugg
again. This gives us an upper bound on the size of the search space, i.e., how
much would need to be searched to find the bug in the worst case, and also
provides an indication of how long it might take to verify the program after
fixing the bug. On the fixed \bench{safestack-331}, \Nidhugg terminates in
only $24$ minutes after exploring $5\,772+8\,521\,721$ traces.
This demonstrates how the techniques we presented in this paper substantially
reduce the search space on \bench{safestack}, allowing the bug to be found or
its absence verified by an exhaustive SMC technique. To our knowledge, no
other exhaustive technique has ever been able to discover the bug in
\bench{safestack}.

\section{Related Work}
Since SMC tools assume the analysed program to terminate, they must first
bound unbounded loops. Several tools~\cite{tacas15:tso,NoDe:toplas16,KLSV:popl18,GenMC@PLDI-19} have an automatic loop unroller that
is parameterised by a chosen loop bound.
Several SMC tools, including \Nidhugg~\cite{tacas15:tso},
\RCMC~\cite{KLSV:popl18} and \GenMC~\cite{GenMC@PLDI-19}, transform simple
forms of spinloops, such as the one shown in \cref{fig:spinwait:original}, to \assume statements, but only transform simple polling loops that can be recognised
syntactically. We are not aware of any tool that transforms loops into
\await statements, meaning existing tools are susceptible to
scalability problems for programs like the sorting networks shown in \cref{fig:prodcons}. 
An SMC technique that can diagnose livelocks of spinloops under fair
scheduling is \toolnamefont{VSync}~\cite{VSync@ASPLOS-21}. However, to do so
it enforces fairness, and cannot bound the loop even with an \assume, thus
exploring many more traces than tools which transform
spinloops to \assume{}s.

\Saver~\cite{SAVER@FMCAD-21} also aims to block pure loop iterations by introducing \assume statements.
It identifies pure loop iterations dynamically, instead of by static analysis as
in our approach. \Saver's approach allows to detect a larger class of pure loop iterations, but it does not allow further rewrite with \await{}s. Furthermore, our PLP transformation can block a looping thread at any point in the loop, not just at the back edge.
\Saver also employs several smaller program transformations,
such as loop rotation and merging of bisimilar control flow graph nodes,
that can 
increase the number of loops that may qualify as pure.
These transformations are orthogonal to the detection of pure loop iterations, and could also be used in our framework.

Checking for purity of loop iterations is an idea that has appeared in other
contexts, such as to verify atomicity for concurrent data
structures~\cite{FFQ;purity:journal,Lesani:cav14} and to reduce complexity for
model checking them (e.g.,~\cite{AHHJR:sttt}).

The Optimal-DPOR algorithm implemented in \Nidhugg,
handles mutex locks but not \await statements.
In the journal article of the Optimal-DPOR algorithm~\cite{optimal-dpor-jacm},
principles for handling other blocking statements are presented.
Our \gbalg develops these principles into a practical and efficient algorithm,
which we have also implemented in \Nidhugg.
As future work, the Optimal-DPOR with Observers~\cite{observers} algorithm,
which allows two statements to only conflict in the presence of a third event,
could also be extended (potentially at higher cost) to handle \await{}s.

\section{Concluding Remarks}
We have presented techniques for making SMC with DPOR more effective on loops that
perform pure iterations, including a 
static program analysis technique to detect pure loop executions, a program transformation to block and
also remove them, a weakening of the standard conflict relation, and an optimal DPOR
algorithm which handles the so introduced concepts.
We have implemented the techniques in \Nidhugg, showing 
that they can significantly speed up the analysis of concurrent
programs with pure loops, and also detect concurrency errors.

\section*{Acknowledgements}
This work was partially supported by the Swedish Research Council through
grants \#621-2017-04812 and 2019-05466,
and by the Swedish Foundation for Strategic Research through project aSSIsT.
We thank the anonymous FMCAD reviewers for detailed comments and suggestions
which have improved the presentation aspects of our work.

\bibliographystyle{IEEEtranS}
\begin{footnotesize}
\bibliography{bibdatabase}
\end{footnotesize}

\ifwithappendix{
\appendix

\subsection{Proof of \cref{thm:plp-safety}}
\label{app:plp-proof}
We prove \cref{thm:plp-safety}, i.e., correctness and completeness of
partial loop purity elimination, restated as
\cref{thm:ax-plp-safety,thm:ax-plp-optimality}.

\begin{theorem}[Local State Preservation]
  \label{thm:ax-plp-safety}
  Let $\prog'$ be the program resulting from applying Partial Loop Purity
  Elimination to $\prog$. Then each local state $\thstate$ of a thread $p$
  which is reachable in $\prog$ is also reachable in $\prog'$, provided no
  loop of $p$ is unavoidably pure from $\thstate$.
\end{theorem}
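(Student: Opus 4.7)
The plan is to construct, from any execution of $\prog$ reaching $\thstate$, a matching execution of $\prog'$ that also reaches $\thstate$, and to argue that no PLP-inserted \assume blocks the simulation before $\thstate$ is attained.

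First, I use the hypothesis to pick a reaching execution $\pi_0$ of $\prog$ in which, at the step attaining $\thstate$, $p$ is not in the process of completing a pure iteration of any loop containing $\thstate$. The hypothesis directly gives such an execution for the innermost loop containing $\thstate$, and this automatically witnesses the property for every enclosing loop, since a would-be-pure outer iteration must have each of its inner sub-iterations be pure too. If $\thstate$ lies outside all loops of $p$, the condition is vacuous and any reaching execution works.

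Next, I normalise $\pi_0$ by repeatedly stripping, for any thread, a completed pure loop iteration, obtaining an execution $\pi^*$ of $\prog$ in which no completed iteration is pure. Each strip is sound because the three defining clauses of a pure iteration guarantee that the stripped segment performs no global write (so interleaved reads by other threads are undisturbed), modifies no local register used after the header (so the stripped thread's subsequent behaviour is preserved), and takes no internal backedge (so the segment is a well-defined header-to-header cycle). Each strip shortens the execution, so the process terminates, and $\pi^*$ still reaches $\thstate$.

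Third, I replay $\pi^*$ in $\prog'$ and show that every $\assume(\neg c_i)$ that $p$ encounters on its way to $\thstate$ succeeds. At an insertion point, all registers occurring in $c_i$ have been assigned, either outside the loop or during the current iteration, so ``$c_i$ is true there'' coincides with ``$c_i$ is \deftrue there''. Because $c_i$ is a disjunct of the purity condition of the enclosing loop, $c_i$ being \deftrue makes the purity condition itself \deftrue, and by the defining property of an \fpc at the header this forces the ongoing header-to-header iteration to be pure; the ``took internal backedge'' boolean conjoined with $c_i$ covers the internal-backedge side condition. Contrapositively, whenever the current iteration is non-pure, $c_i$ is false and $\assume(\neg c_i)$ passes. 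In $\pi^*$ every completed iteration of $p$ is non-pure by the stripping step, and the in-progress iteration of $p$ at $\thstate$ (if any) is non-pure by the choice of $\pi_0$, so no \assume on $p$'s path up to $\thstate$ blocks.

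The main obstacle I expect is coping with \assume statements that block a different thread $q$ during the replay. I will argue that if such an \assume triggers with $c_i$ true, then by the \fpc property the remainder of $q$'s current iteration performs no global write and reaches its header without internal backedges; moreover this iteration cannot complete within $\pi^*$, for otherwise it would be a completed pure iteration and would have been stripped. Consequently the $\prog'$ execution obtained from $\pi^*$ by freezing every such blocked $q$ at its \assume induces the same sequence of shared-memory updates and the same schedule for $p$ as $\pi^*$, so $p$ still reaches $\thstate$ in $\prog'$, yielding the desired witness.
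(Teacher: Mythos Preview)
Your proposal follows essentially the same route as the paper: strip completed pure iterations (your step~2 matches the paper's passage from $\exec$ to $\exec'$), argue via the \fpc definition that the remaining \assume{s} on $p$'s path succeed (your step~3 is the content of the paper's Lemma~7), and deal with other threads whose tail would be blocked (your step~4 mirrors the paper's passage from $\exec'$ to $\exec''$, where the paper removes those tails rather than freezing them, but for the same reason).

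There is, however, a gap in the justification of step~1. Your claim that a non-purity witness for the innermost loop containing $\thstate$ also witnesses non-purity for every enclosing loop fails when, in the chosen $\pi_0$, the innermost in-progress iteration \emph{exits} its loop rather than completing non-purely. In that case $p$ is vacuously ``not in the process of completing a pure iteration'' of the inner loop (no inner iteration is completed at all), yet the enclosing outer iteration may still go on to complete purely, so $\pi_0$ need not be a witness for the outer loop. Fortunately step~1 is unnecessary: in step~3, for the in-progress iteration you can appeal directly to the theorem's hypothesis rather than to the particular $\pi_0$. If the \assume for loop $\loopvar$ were to fail at some state $\thstate'$ on $p$'s path between the last visit to $\loopvar$'s header and $\thstate$, then the \fpc property forces every continuation from $\thstate'$ to complete a pure $\loopvar$-iteration; but the path from $\thstate'$ to $\thstate$ followed by a non-pure continuation from $\thstate$ (one exists since $\loopvar$ is not unavoidably pure from $\thstate$) is a non-pure continuation from $\thstate'$, a contradiction. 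This is exactly how the paper proceeds via its Lemma~7, and with this adjustment your plan goes through.
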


We start by proving a weaker statement, namely that PLP only bounds
unavoidably pure executions.
\begin{lemma}
  \label{lem:ax-plp-safety}

Whenever thread $p$ in state $\thstate$ executes a false \assume statement
  in $\prog'$ inserted by the transformation from $\prog$, then $\thstate$ is
  in an unavoidably pure loop.
\end{lemma}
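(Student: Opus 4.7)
The plan is to argue: a false inserted \assume witnesses the truth of one disjunct of its enclosing loop's purity condition at the program point of the \assume; the Forward Purity Condition (FPC) property then pins the current iteration down as pure; and since the witnessing information is already encoded in the local state, every execution visiting $\thstate$ is mid-iteration of a pure iteration.

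First, I would recover the data put in place by the transformation. Let $\loopvar$ be the loop enclosing the inserted statement $\assume(\neg c_i)$, with header $\progpta$ and purity condition $c_1 \meet c_2 \meet \cdots \meet c_n$ (an FPC at $\progpta$), and let $\progptb$ be the program point of the \assume. By construction, $\progptb$ is textually after all definitions of the registers appearing in $c_i$, including the phi-node ``took internal backedge'' boolean that is conjoined into $c_i$.

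Second, I would show that when the \assume evaluates to false in state $\thstate$, the current iteration of $\loopvar$ is pure. The \assume firing false means $c_i$ is true in $\thstate$; since every register of $c_i$ is defined at $\progptb$, this is exactly the statement that $c_i$ is \emph{defined-true} at $\progptb$, and therefore so is the PC, as a disjunction of which $c_i$ is a disjunct. The thread executing the \assume is still inside $\loopvar$, and because the backedge flag is false it has not taken an internal backedge since the last visit to $\progpta$. Hence the preconditions of the FPC property at $\progpta$ are satisfied with witness $\progptb$, and the conclusion yields both that the iteration will reach $\progpta$ without any further internal backedge and that it performs no modification to globals or to locals used after the header — i.e., the iteration in progress is pure.

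Third, to upgrade this into \emph{unavoidable} purity, I would observe that the local state $\thstate$ determines the values of all registers that $c_i$ and the backedge flag depend on. So in any execution in which thread $p$ reaches $\thstate$, the very same valuation makes $c_i$ defined-true and the backedge flag false, and the previous paragraph applies verbatim. Thus $p$ is always mid-iteration of a pure iteration of $\loopvar$ whenever it is in $\thstate$, which is precisely the definition of $\loopvar$ being unavoidably pure from $\thstate$.

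The main technical obstacle I anticipate is the bookkeeping around the ``took internal backedge'' phi-node boolean: I need to justify that its SSA computation really tracks the property ``no internal backedge has been taken since the last execution of $\progpta$''. Without this invariant the FPC precondition cannot be discharged. I would establish it by a short induction over control-flow edges between $\progpta$ and $\progptb$, using the fact that the flag is initialised to false at the header and forced to true by phi-nodes on every internal backedge.
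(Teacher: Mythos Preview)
Your proposal is correct and arrives at the same conclusion, but it is organised differently from the paper's argument. The paper proceeds by contradiction: it assumes some continuation from $\thstate$ either exits the loop or completes an impure iteration, and then appeals directly to the \emph{construction} of $c_i$ as a conjunction of branch and guard conditions along one specific pure path back to the header, observing that a true $c_i$ forces execution down that path and thereby rules out the assumed continuation. You instead argue directly, invoking the abstract Forward Purity Condition specification (Definition~1) rather than the backward-propagation construction, and you separate out as an explicit third step the ``upgrade to unavoidably pure'' via determinism of local state---something the paper leaves implicit. Your route is a little more modular (it cleanly factors through the FPC contract and isolates the backedge-flag invariant as its own obligation), whereas the paper's route stays closer to how the analysis actually computes $c_i$ and so avoids having to re-derive that the construction meets the FPC specification. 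Both are sound; the main care point in your version is exactly the one you flag: once you have discharged the ``no internal backedge'' precondition via the phi-computed flag, you still need that the remaining registers of $c_i$ were assigned in the current iteration so that ``$c_i$ evaluates true'' really yields ``the PC is \deftrue'' at $\progptb$---but this follows from the same invariant together with the insertion-point discipline.
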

\begin{proof}
  Assume the counterfactual. State $\thstate$ must be in a loop $\loopvar$. Let
  $\exec$ be the execution of~$\prog'$ that leads to state $\thstate$. The
  assumption is that there is a continuation $w$ of $\exec$ (if we ignore the
  \assume{}s), such that $\exec.w$ either exits the loop without returning to
  the header or completes an impure iteration of~$\loopvar$ after $\thstate$.
Also, there must be some path condition $c_i$ that PLP analysis computed for
  $\loopvar$ in~$\prog$ which was \deftrue at $\thstate$.
  However, by the construction of purity conditions this $c_i$ must include
  all branch conditions for
  returning to the header, so $\exec.w$ cannot have exited without returning
  to the header. Furthermore, $c_i$ would force the execution along a specific
  path through the loop, in which no statement changed the value of a global
  variable, took an internal backedge or a backedge along which the header
  is \impureheader{}. This contradicts the assumption. Thus,
  it must be false, and \cref{lem:ax-plp-safety} true.
\end{proof}

\begin{proof}[Proof of \cref{thm:ax-plp-safety}]
  If $\thstate$ is in a loop, it was not bounded by PLP by
  \cref{lem:ax-plp-safety}, given the assumption that no loop of~$p$ is
  unavoidably pure from $\thstate$. If $\exec$ is an execution of~$\prog$ that
  reaches~$\thstate$, then by removing any complete pure loop executions
  in~$\exec$, we obtain an execution $\exec'$ that also reaches~$\thstate$.
  If any final states of any threads (other than $p$) in~$\exec$ are in
  unavoidably pure loops, we furthermore remove them, yielding another
  sequence $\exec''$. This sequence is an execution that reaches $\thstate$
  because the removed events, by being in unavoidably pure loops, cannot have
  modified any global variables, and so~$p$ can still read the sequence of
  values that is required to reach~$\thstate$.
\end{proof}

For the purposes of proving Property \ref{thm:plp-optimality-prop}
(completeness) of \cref{thm:plp-optimality}, we assume that no
underapproximation is performed when \fpcs are combined using logical
connectives. Additionally, we assume that only two types of simplification of
\fpcs are performed. Either
\begin{inparaenum}
  \item two conjunctions $a \wedge p$ and $a \wedge \neg p$ can be merged to a
    single conjunction $a$, and
  \item two terms that reference the same register(s) and one imply the other
    can be replaced by one of them.
\end{inparaenum}

\begin{theorem}[Pure Loop Elimination]
  \label{thm:ax-plp-optimality}
  Let $\prog'$ be the program resulting from applying PLP to $\prog$. Then no
  execution of $\prog'$ exhibits a completed pure loop iteration of some
  thread.
\end{theorem}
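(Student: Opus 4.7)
My plan is to argue by contradiction. Assume that some execution of $\prog'$ completes a pure iteration of a loop $\loopvar$ in thread $p$, and let $\pi$ denote the concrete control-flow path taken within $\loopvar$ during that iteration, starting and ending at the header $h$. By purity, $\pi$ does not take an internal backedge, writes no global variable, and modifies no local register live after $h$. The plan is to exhibit an inserted \assume statement along $\pi$ whose condition must be false when reached, contradicting the assumption that the iteration completed.

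The central step is to show, by backwards induction along $\pi$, that the \fpc at every program point visited by $\pi$ contains a disjunct that evaluates to true on $\pi$. The base case would be the final back edge $(A,h)$: since the iteration is pure, $h$ is not \impureheader along $(A,h)$, so by definition $\fpcedge{A}{h} = \pclit{g}$ where $g$ is the branch guard actually taken on $\pi$, which is true. The inductive step would use the propagation rules $\fpcbefore{s} = \fpcafter{s} \land g_s$ and $\fpcafter{A} = \bigmeet_{i} \fpcedge{A}{B_i}$: for the specific statement $s$ and specific successor block taken on $\pi$, purity guarantees that $g_s$ holds and the conjoined disjunct survives; the meet across alternatives preserves all disjuncts under the no-underapproximation assumption. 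The two permitted simplifications---merging $a \land p$ with $a \land \neg p$ into $a$, and dropping implied terms---can only replace a true disjunct by a logically weaker disjunct that is still true on $\pi$. The auxiliary ``took internal backedge'' flag, conjoined into each disjunct, equals false on $\pi$ by purity, so the invariant is preserved all the way to the loop header, yielding a disjunct $c_i$ of the purity condition that is true on $\pi$.

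Once that invariant is in hand, I would appeal to the insertion rule of PLP: the transformation places $\assume(\neg c_i)$ at the earliest program point $\progpta^*$ at which every register mentioned in $c_i$ has already been defined, and past any ``earliest insertion point'' constraints from memory accesses---all of which are safely executed on the pure path $\pi$, so the execution actually reaches $\progpta^*$. At $\progpta^*$ the expression $c_i$ is \deftrue and evaluates to true by the invariant, so the inserted \assume must block the thread, contradicting completion of the iteration.

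The hard part will be justifying that the disjunct $c_i$ constructed by the induction really survives as a disjunct of the PC produced by the analysis, rather than being rewritten away or fused with others in a way that destroys the ``true on $\pi$'' invariant. This is precisely where the no-underapproximation assumption and the restricted set of simplifications spelled out just before the theorem statement do the essential work, so the remaining task is a careful case analysis against the precise form of the FPC propagation equations and the rewriting rules, together with verifying that the ``earliest insertion point'' bookkeeping never pushes the \assume past the end of the iteration.
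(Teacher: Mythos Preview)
Your proposal is correct and follows essentially the same approach as the paper: a proof by contradiction that walks the assumed pure iteration backwards, maintaining an invariant about the \fpc at each visited point and using the no-underapproximation and restricted-simplification assumptions to argue the invariant reaches the header. The only cosmetic difference is that the paper bundles the insertion-location argument into its invariant (which it calls ``assumes-false''), whereas you track only ``some disjunct is true on $\pi$'' and defer the argument that the corresponding \assume lands on $\pi$ to the end; both routes work under the stated assumptions.
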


\begin{proof}
Assume the counterfactual. Then there is some
execution $\exec$ of $\prog'$ that exhibits a pure loop iteration.
Let $\event$ be the first event in the pure loop iteration, i.e., the first
statement of the header, and let $\event'$ be the last, i.e., after $\event'$,
$\procof{\event}$ will execute the first statement of the header again.
Following $\exec$ in reverse, we can also follow how \fpcs would have been
propagated by PLP, starting at $\event'$, i.e., a backedge to the header. We
will show that an \assume statement that evaluates to false in~$\exec$ must have
been inserted between $\event$ and $\event'$, thus contradicting the
assumption. We show this by showing that the \fpcs at each point along the
pure loop iteration have the property that if they were the purity condition
of the loop, a false-evaluating \assume would have been inserted by PLP along
the path from $\event$ to $\event'$. We start with $\fpcafter{\event'}$, and
then show that this property is preserved by all the transfer rules, thus
showing that it holds for $\fpcbefore{\event}$, and so for the whole loop.
\newcommand{\assfalse}{\ensuremath{\mbox{assumes-false}}\xspace}
Let us call this property \emph{$\assfalse(\fpcvar)$} for some \fpc $\fpcvar$.

Note that we do not need to worry about the ``took internal backedge''
conjunct in the assume, as pure loop executions cannot take inner loops. Such a
conjunct will always be true for the purposes of this proof.

The \fpc computed before the statement of $\event'$ is given by the backedge
to the header. Since the loop execution is pure, the \fpc will be \pclit{g}
where $g$ is the backedge condition. We know that it holds after $\event'$.
Furthermore, since all registers mentioned in~$g$ reach $\event'$, if
\pclit{g} was the loop purity condition the assume would be inserted somewhere
along the path from $\event$ to $\event'$ in~$\exec$.
Now, we will show that this is preserved by all the transfer rules. First, we
consider the rules for some atomic statement $s$. Load is trivial. Stores
are not possible in a pure loop execution. For atomic adds of form
\stmt{\xvar{} +:= \areg{}}, we add \pclit{\areg{} = 0} to each conjunct. But
\pclit{\areg{} = 0} must evaluate to false at~$s$ since the loop execution is
pure, and otherwise the global variable \xvar{} would have been modified by
$s$, thus contradicting the purity of the loop execution. It must also
evaluate to true at any later insertion point, as no more than one definition
of each register can appear in a pure loop execution. The insertion
location of $\fpcbefore{s}$ might be later than that of $\fpcafter{s}$, but
all other terms in $\fpcafter{s}$ must still evaluate to true if moved later.
The transfer rules for atomic exchange and atomic compare-exchange preserve
the property in the same way as the rule for atomic add, and can be proven
similarly.

Now, if $s'$ is the first statement of a block $B$, and $s$ is the last
statement of a block $A$ immediately preceding it in~$\exec$, we know,
inductively, that $\assfalse(\fpcbefore B)$, and then we can show
$\assfalse(\fpcafter A)$ as follows: Let~$g$ be the condition on the $(A,B)$
edge. It evaluates to true at~$s$ and later. Thus, $\assfalse(\fpcedge AB)$
holds because $\fpcedge AB = g \wedge \fpcbefore B$, similarly as for atomic
statements. The condition $\fpcafter A = \fpcedge AB \vee \fpcvar$ for some
$\fpcvar$
by the transfer rules.
As edge guards are mutually exclusive, we have $g \implies \neg \fpcvar$.
Since $g$ evaluates to true, $\fpcafter A$ must evaluate to false.
If $\fpcafter A$
contains $\fpcedge AB$ as a disjunct, then trivially $\assfalse(\fpcafter A)$.
However, because of the limitations on how \afpc may be simplified,
for each disjunct in $\fpcedge AB$, some disjunct in $\fpcafter A$
must imply it and contain a subset of the registers in it. As removing register references from a conjunction in an \assfalse \fpc
cannot move its insertion location away from the
path followed by~$\exec$, we have
$\assfalse(\fpcafter A)$.

We have now shown inductively that the pure loop execution in~$\exec$ must
contain an \assume with a condition that evaluates to false. This contradicts
the assumption that the pure loop execution in~$\exec$ is complete. Thus, the
assumption is false and \cref{thm:ax-plp-optimality} holds.
\end{proof}

The extension to segmentation-faulting instructions can be proven to satisfy
\cref{thm:ax-plp-optimality} similarly.

\subsection{Proof of \cref{thm:correct-optimal}}
\label{sect:correctness-final}

Let us now prove \cref{thm:correct-optimal}, i.e.,
the correctness and optimality of \gbalg.
We begin with correctness, stated as \cref{thm:correctness-final},
whereafter we go to optimality, stated as \cref{thm:godefroid}.

Throughout, we assume a particular completed
invocation of \gbalg. This invocation consists
of a number of terminated calls to $\exploregb(\exseq)$
for some values of~$\exseq$.
Let $\exseqs$ denote the set of executions
that have been explored in some call to $\exploregb(.)$.
Define the
ordering $\propto$ on $\exseqs$ by letting $\exseq \propto \exseq'$ if
$\exploregb(\exseq)$ returned before $\exploregb(\exseq')$.
Intuitively, if
one were to draw an ordered tree that shows how the exploration has proceeded,
then $\exseqs$ would be the set of nodes in the tree, and $\propto$ would be
the post-order between nodes in that tree.
We use $w, w', \ldots$ to range over sequences, 
$\event, \event', \ldots$ to range over events, as well as:
\begin{itemize}
\item ${\exseq} \enables {w}$ to denote that $\exseq.w$ is an execution,
\item $\exseq' \leq \exseq$ to denote that the sequence $\exseq'$ is
  a prefix of the sequence $\exseq$,
\item $\domofafter{\exseq}{w}$
to denote $\domof{\exseq.w} \setminus \domof{\exseq}$, i.e., the events
  in $\exseq.w$ which are in $w$,
\item  $\event \totorder{\exseq} \event'$ to denote that
  $\event$ occurs before $\event'$ in $\exseq$,
  i.e., $\totorder{\exseq}$ is the total order of events,
\item $w \mtequivafter{\exseq} w'$ to denote $\exseq.w \mtequiv \exseq.w'$, and
\item $\mtclass{\exseq}$ to denote the equivalence class of $\exseq$.
\end{itemize}
For an execution $\exseq.w$ and thread $p$,
\begin{itemize}
  \item let $p \in \firsttrans{\exseq}{w}$ denote that $p \in \wfirsttrans{\exseq}{w}$ and $p \in w$, and
    \item let $\indepafter{\exseq}{p}{w}$ denote that
$p \in \wfirsttrans{\exseq}{w}$ and $p \not\in w$, i.e., that $\nextof{\exseq}{p}$ is independent of
      all events in $w$.
\end{itemize}
For an arbitrary execution $\exseq \in \exseqs$, let
$\finalsleep(\exseq)$ denote the value of $\sleepattr(\exseq)$
  at the point when $\exploregb(\exseq)$ returns.

We begin by two useful lemmas.
The first follows from the involved definitions.
\begin{lemma}
  \label{lem:WI-hbprefix}
  If $p \in \wfirsttrans{\exseq}{w}$ and $u \infirstseqs{\exseq} w$ then
    $p \in \wfirsttrans{\exseq}{u}$.
\end{lemma}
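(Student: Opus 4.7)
The plan is to unfold both definitions into statements about the happens-before relation and then case-split on whether $p$ appears in $w$ and, if so, in $u$. From $u \infirstseqs{\exseq} w$ I pick a $v$ with $\exseq.u.v \mtequiv \exseq.w$; this gives $\domofafter{\exseq}{u.v} = \domofafter{\exseq}{w}$ and $\hbefore{\exseq.u.v} = \hbefore{\exseq.w}$, and in particular $\domofafter{\exseq}{u} \subseteq \domofafter{\exseq}{w}$. Reading \cref{def:initials} via the intuition given after it, $p \in \wfirsttrans{\exseq}{w}$ amounts to the following: if $p \in w$, the first $p$-event $e_p$ in $w$ has no $\hbefore{}$-predecessors among the events of $w$; and if $p \notin w$, $\nextof{\exseq}{p}$ is $\hbefore{}$-independent (in both directions) of every event of $w$.

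Two of the three cases are then immediate. When $p \notin w$, also $p \notin u$, and the independence of $\nextof{\exseq}{p}$ from the events of $w$ restricts to the smaller set of events of $u$, so $p \in \wfirsttrans{\exseq}{u}$. When $p \in u$, necessarily $p \in w$ and the first $p$-event of $u$ coincides with $e_p$; the absence of $\hbefore{}$-predecessors of $e_p$ in $w$ restricts to their absence in $u$, so again $p \in \wfirsttrans{\exseq}{u}$.

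The one delicate case is $p \in w$ with $p \notin u$, where $e_p$ must lie in $v$. To conclude via the $p \notin u$ clause of \cref{def:initials}, I need independence between $e_p$ and every event of $u$ in \emph{both} directions. One direction is immediate from $p \in \wfirsttrans{\exseq}{w}$: no event of $u$ is $\hbefore{}$-before $e_p$, since no event of the superset $w$ is. The opposite direction is the main obstacle, and the key observation resolving it is that $\hbefore{\exseq.u.v}$ respects the total order of $\exseq.u.v$, and in that total order every event of $u$ precedes $e_p \in v$; hence no event of $u$ can be $\hbefore{}$-after $e_p$ either. Combining the two directions yields the required independence, and therefore $p \in \wfirsttrans{\exseq}{u}$.
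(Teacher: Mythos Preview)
Your proof is correct and is precisely the detailed unfolding of the definitions that the paper alludes to but does not carry out; the paper's own proof consists of the single remark that the lemma ``follows from the involved definitions.'' Your case split and, in particular, your handling of the delicate case $p \in w$, $p \notin u$ via the two directions of independence (one from $p \in \wfirsttrans{\exseq}{w}$, the other from $\hbefore{\exseq.u.v}$ respecting the total order of $\exseq.u.v$) is exactly the argument one would expect when expanding that remark.
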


\begin{lemma}
\label{lem:invariants}
During the execution of \gbalg, 
a new leaf $\exseq$ is added to  the execution tree $\exseqs$ only if
  there is no previously added execution of form $E'.p$ with $\exseq = \exseq'.w$, such that $E'$ but not $E'.p$ is a prefix of $E$, and $p \in \wfirsttrans{\exseq'}{w}$.
\end{lemma}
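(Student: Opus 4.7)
The plan is to prove \cref{lem:invariants} by induction on the sequence of leaf additions to the execution tree $\exseqs$, strengthening the statement with a companion invariant on the sleep sets. Leaves enter $\exseqs$ in two ways during \gbalg: (A) indirectly via an insertion of a wakeup sequence $v$ into some $\wut(\exseq')$ by $\insertseq$, which will later be popped off and committed by the exploration phase, and (B) directly by the fresh choice at \cref{algl:optimal-blocking2-wutinit-fresh} when $\wut(\exseq) = \emptytree$. The companion invariant I would maintain is the following: at every point in the algorithm, for every prefix $\exseq$ currently on the call stack and every $p \in \sleepattr(\exseq)$, the execution tree already contains a leaf of the form $\exseq.p.w$ for some $w$, and moreover for every subsequent extension $v$ that could be offered at $\exseq$ we have $p \in \wfirsttrans{\exseq}{v}$ whenever $p$ is independent with the prefix of $v$ that precedes $p$'s possible occurrence.

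For case (A), the insertion call is guarded by the test $\sleepattr(\exseq') \cap \wfirsttrans{\exseq'}{v} = \emptyset$, so the companion invariant rules out any previously added $\exseq'.p.w$ with $p \in \wfirsttrans{\exseq'}{v}$ and $p$ not matching the first event of $v$. The remaining candidates for a ``bad'' previously added leaf live deeper in $\wut(\exseq')$, i.e.\ they have the form $\exseq'.u.p.w$ for some $u \le v$ with $p \in \wfirsttrans{\exseq'.u}{v \setminus u}$ and $p$ disagreeing with the next event of $v \setminus u$. But $\insertseq$, starting at the root $u = \emptyseq$, explicitly descends at \cref{algl:wut-insert-descend} into the unique child $u.p$ with $p \in \wfirsttrans{\exseq'.u}{\text{current remainder}}$ whenever $p$ also occurs in the remainder, and otherwise returns without inserting (\cref{algl:wut-insert-exhaust}). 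Using \cref{lem:WI-hbprefix} to transfer weak-initial membership along happens-before prefixes, I would show that if $\insertseq$ actually appends $v$ as a fresh leaf at \cref{algl:wut-insert-insert}, no branching point encountered on the way admits a sibling witnessing the forbidden configuration.

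For case (B), the exploration phase only takes a fresh thread $p$ when $\wut(\exseq) = \emptytree$, which means no planned wakeup sequence runs through~$\exseq$; by construction $p$ must lie outside $\sleepattr(\exseq)$ (else the algorithm would immediately close the branch by the sleep-set semantics), so the companion invariant directly yields that no previously added leaf has the form $\exseq.p'.w$ threatening $\exseq.p$. The companion invariant itself is preserved by the two sleep-set updates: the inheritance at \cref{algl:optimal-blocking2-sleepset-update}, which keeps only $q \in \sleepattr(\exseq)$ that are independent with $p$ (since dependence breaks the witness property for extensions through $\exseq.p$), and the addition of $p$ to $\sleepattr(\exseq)$ after $\exploregb(\exseq.p)$ returns (which correctly registers the newly completed branch as redundancy-witness for future siblings of $p$).

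The main obstacle I expect is precisely the joint maintenance of the leaf-addition invariant together with the companion sleep-set invariant across the two very different mechanisms that add leaves, together with the non-trivial interaction of \cref{lem:WI-hbprefix} and the tree-traversal logic of $\insertseq$: one has to argue that the test at \cref{algl:wut-if-in-wi}, combined with the filtered inheritance of sleep sets along $\exseq \mapsto \exseq.p$, exactly captures redundancy against \emph{every} previously added leaf, including those inserted deep inside $\wut(\exseq')$ by earlier race analyses and not yet explored. The rest of the argument is essentially bookkeeping justified by the definitions of $\wfirsttrans{\cdot}{\cdot}$ and $\infirstseqs{\cdot}$.
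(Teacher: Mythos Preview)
Your plan follows the same two ingredients the paper uses---the sleep-set test at the call site and the child-by-child check inside $\insertseq$---but you wrap them in considerably more scaffolding than the paper does. The paper's proof is three sentences: it observes that the only place a genuinely new leaf is created is \cref{algl:wut-insert-insert}, that the loop at \crefrange{alg:wut-insert-forever}{algl:wut-insert-loop-end} has already checked $p \notin \wfirsttrans{\exseq'.u'}{u''.v}$ for every existing child $\exseq'.u'.p$ along the descent, and that the guard $\sleepattr(\exseq') \cap \wfirsttrans{\exseq'}{v} = \emptyset$ handles the branching points that lie at or above $\exseq'$. No induction or companion invariant is needed.

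Two concrete problems with your plan. First, your companion invariant is stated in the wrong direction: you record that $p \in \sleepattr(\exseq)$ implies a leaf $\exseq.p.w$ exists, but what the argument actually needs is the converse flavour---that every previously-added divergent branch at a prefix of the insertion point is either represented in $\sleepattr(\exseq')$ (hence caught by the guard) or still present as a child somewhere in $\wut(\exseq')$ (hence caught by the traversal). The clause you append (``for every subsequent extension $v$ \ldots\ we have $p \in \wfirsttrans{\exseq}{v}$ whenever $p$ is independent with the prefix of $v$ \ldots'') is essentially a restatement of the definition of $\wfirsttrans{}{}$ and does no work. Second, your Case~(B) argument is wrong: the fresh choice at \cref{algl:optimal-blocking2-wutinit-fresh} picks any $p \in \enabledafter{\exseq}$ with no check against $\sleepattr(\exseq)$, so ``else the algorithm would immediately close the branch'' is not something the code does. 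The correct reason Case~(B) is harmless is that $\exseq$ was itself already a leaf satisfying the invariant, and extending a leaf by one event preserves the condition via \cref{lem:WI-hbprefix} (if $q \in \wfirsttrans{E'}{w.p}$ then $q \in \wfirsttrans{E'}{w}$, since $w \infirstseqs{E'} w.p$); this is presumably why the paper does not even mention this case.

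Finally, your sentence ``the remaining candidates \ldots\ live deeper in $\wut(\exseq')$'' silently excludes not-yet-explored sibling branches sitting in $\wut(\exseq'')$ for strict ancestors $\exseq'' < \exseq'$; those are in $\exseqs$ but neither in $\sleepattr(\exseq')$ nor in $\wut(\exseq')$. The paper's proof is equally terse on this point, so you are not worse off, but your explicit enumeration of cases makes the omission more visible.
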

\begin{proof}
The invariant is established by examining the steps of
\cref{alg:dpor-alg-blocking2}.
The only step which inserts a new sequence into~$\exseqs$ is
the insertion of a new leaf in a wakeup tree at \cref{algl:wut-insert-insert}
in the function $\insertseq{\cdot}{\cdot}{\cdot}$.
This step inserts a sequence of form $E'.u.v$ after checking that
$p \not\in \wfirsttrans{\exseq'.u}{v}$ for all existing executions
of form $E'.u.p$. Previous rounds of the loop at \crefrange{alg:wut-insert-forever}{algl:wut-insert-loop-end} checked that
$p \not\in \wfirsttrans{\exseq'.u'}{u''.v}$ for all existing executions
of form $E'.u'.p$ with $u = u'.u''$. Also, the test before insertion at
\cref{algl:optimal-blocking2-nbrace-if} and \cref{algl:optimal-blocking2-brace-if} perform the corresponding check for prefixes of $E'$.
\end{proof}

We can now prove that \cref{alg:dpor-alg-blocking2} is correct in the sense that 
for each maximal execution $\exseq$, it explores an execution
in $\mtclass{\exseq}$.
This is formalised in~\cref{thm:correctness-final} below.
Its proof is by induction over the executions in $\exseqs$,
using the order $\propto$ in which invocations $\exploregb(\exseq)$ return.
The inductive step for an execution $E$ is proven 
by contradiction, by making the assumption that some maximal sequence $\exseq.w$ is unexplored
after the call $\exploregb(\exseq)$ returns. The proof then arrives at a contradition through
a sequence of claims.
First it is shown that the assumption implies 
Claim~\ref{clm:not-explored}, which
states that  $p \not\in \wfirsttrans{\exseq}{w}$ for all
$p \in \finalsleep(\exseq)$.
Thereafter, the sequence of Claims
\ref{clm:valid-eq-q-wq-wR}--\ref{clm:w-infirstseqs-q'} establish that
the algorithm must have explored some sequence which exposes a race,
which by 
Claim~\ref{clm:equivalent-satisfies} causes the algorithm to include
a leaf with properties that contradict the initial assumption in the
inductive step, thereby concluding the proof of the theorem.

\begin{theorem}[Correctness of \gbalg]
\label{thm:correctness-final}
Whenever a call to $\exploregb(\exseq)$ returns
during \cref{alg:dpor-alg-blocking2}, then
for all maximal executions $\exseq.w$,
the algorithm has explored
some execution in~$\mtclass{\exseq.w}$.
\end{theorem}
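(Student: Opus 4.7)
The plan is to prove the theorem by induction on $\propto$, the post-order in which calls $\exploregb(\cdot)$ return. For the inductive step at a call $\exploregb(\exseq)$, the induction hypothesis states that every previously returned call $\exploregb(\exseq'')$ with $\exseq'' \propto \exseq$ already satisfies the conclusion for its own maximal extensions. My goal is to show that for every maximal $\exseq.w$, some execution in $\mtclass{\exseq.w}$ lies in $\exseqs$ by the time $\exploregb(\exseq)$ returns, and I will do this by contradiction: assume $\mtclass{\exseq.w} \cap \exseqs = \emptyset$ for some maximal $\exseq.w$.

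First I would establish the ``not-explored'' claim: no $p \in \finalsleep(\exseq)$ is a weak initial of $w$. Each such $p$ was added to the sleep set only after a recursive call $\exploregb(\exseq.p)$ returned. If $p \in \wfirsttrans{\exseq}{w}$, then either $\exseq.w \mtequiv \exseq.p.(w \setminus p)$ (when $p$ occurs in $w$), in which case the induction hypothesis applied to that inner call already produced an execution in $\mtclass{\exseq.w}$; or $\exseq.w.p \mtequiv \exseq.p.w$, which cannot happen when $\exseq.w$ is maximal. Either branch contradicts the initial assumption.

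Next I would locate somewhere in $\exseq$ a race whose wakeup sequence targets $\exseq.w$. Since the leaf-exploration loop of \cref{alg:dpor-alg-blocking2} only skips threads in $\sleepattr(\exseq)$, any missed witness must arise from $\exseq$ itself. Following the Optimal-DPOR pattern, I would walk along $\exseq$ to extract a prefix $\exseq' \leq \exseq$ and a pair $(\event, \event')$ — either a non-blocking race, or a blocking race on an \await-style event — such that the wakeup sequence $v$ built at \cref{algl:optimal-blocking2-nbrace-v} or \cref{algl:optimal-blocking2-brace-v} satisfies $\exseq'.v \infirstseqs{\exseq'} \exseq.w$ modulo $\mtequiv$. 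Propagating the ``not-explored'' claim along $\exseq$ through \cref{lem:WI-hbprefix} shows that the sleep-set test at \cref{algl:optimal-blocking2-nbrace-if} (resp.\ \cref{algl:optimal-blocking2-brace-if}) cannot exclude $v$, so $\insertseq{\exseq'}{v}{\wut(\exseq')}$ either inserts $v$ or descends to an existing covering branch. Together with \cref{lem:invariants}, this forces some call $\exploregb(\exseq'.v')$ with $v' \infirstseqs{\exseq'} v$ to have returned before $\exploregb(\exseq)$; applying the induction hypothesis to that call yields an execution in $\mtclass{\exseq.w}$, which is the desired contradiction.

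The main obstacle is the blocking-race branch at \crefrange{algl:optimal-blocking2-brace-begin}{algl:optimal-blocking2-race-end}, where $\event'$ is disabled after the explored prefix and the algorithm must choose a maximal happens-before prefix $u$ of $\notsucc{\event}{\exseq}$ that re-enables $\event'$. I must prove such a $u$ exists, that the resulting $v = u.\procof{\event'}$ enjoys the happens-before prefix property with respect to the weakened conflict relation (including FAA independence as in \S\ref{sec:inc-by-example}), and that it is not silently skipped by the \texttt{canstop}/\texttt{didinsert} early-exit optimisation in the inner loop over $\event$. Soundness of that optimisation must be argued separately: any $\event$ cut off by \texttt{canstop} is guaranteed to be revisited during race detection on a strictly $\propto$-earlier execution, so no wakeup sequence required for completeness is lost. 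Once the blocking case is nailed down, the chain of claims composes as in the Optimal-DPOR proof and delivers the contradiction, closing the induction.
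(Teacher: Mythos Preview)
Your scaffolding (induction on $\propto$, contradiction, the ``not-explored'' claim) matches the paper, but the heart of the inductive step is misconceived in a way that cannot be repaired without essentially redoing the paper's construction.

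The decisive error is the sentence ``Next I would locate somewhere in $\exseq$ a race whose wakeup sequence targets $\exseq.w$.'' Race detection in the algorithm fires only at \emph{maximal} executions, and the $\exseq$ of the inductive step is in general not maximal (if it were, $w$ would be empty and there would be nothing to prove). More importantly, the race you need involves an event $e_q$ that lives in $w$, not in $\exseq$: it is the first event of $w$ that conflicts with $\nextof{\exseq}{q}$ for some $q \in \finalsleep(\exseq)$. No amount of ``walking along $\exseq$ to extract a prefix $\exseq' \leq \exseq$ and a pair $(\event,\event')$'' will exhibit that event, and consequently no wakeup sequence $v$ with $\exseq'.v$ a happens-before prefix of $\exseq.w$ can be manufactured from races inside $\exseq$ alone.

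The paper's argument is one level of indirection deeper. For each $p \in \finalsleep(\exseq)$ it defines $w_p$ as the longest prefix of $w$ with $\indepafter{\exseq}{p}{w_p}$ and $e_p$ as the next event; it picks $q$ making $w_q$ longest (and, crucially, also chooses $w$ itself among all unexplored witnesses so that $w_q$ is as long as possible). It then applies the induction hypothesis \emph{to the earlier-returning call} $\exploregb(\exseq_q'.q)$, obtaining an explored maximal execution $\exseq_q'.q.z \in \mtclass{\exseq_q'.q.w_q'.w_q.z'}$. The race between $\nextof{\exseq_q'}{q}$ and $e_q$ (or its blocking analogue) is detected during race detection on \emph{that} execution, and the resulting wakeup sequence is inserted into $\wut(\exseq_q')$, yielding the contradiction. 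The ``$w$ chosen with $w_q$ maximal'' clause is also what discharges the blocking case and the $\canstop$/$\didinsert$ optimisation: if the algorithm produced a subsequence $u$ with $w_q'.w_q$ a strict happens-before prefix of $u$, then $u.e_q$ would witness a different unexplored extension with a longer $w_q$, contradicting maximality. Your proposal has no counterpart to the $q$/$w_q$/$e_q$ construction, no application of the IH at $\exseq_q'.q$ to materialise an explored execution containing $e_q$, and no maximal-$w_q$ device, so the chain of claims cannot even be started.

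A smaller inaccuracy, though fixable: not every $p \in \finalsleep(\exseq)$ entered via a call $\exploregb(\exseq.p)$; some are inherited from an ancestor's sleep set. The paper tracks the prefix $\exseq_p' \leq \exseq$ at which $\exseq_p'.p$ was actually explored and uses $\indepafter{\exseq_p'}{p}{w_p'}$ to transport the IH back to $\exseq$; your version of the not-explored claim needs the same adjustment.
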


Since the initial call to the algorithm is
$\exploregb(\emptyseq)$,
\cref{thm:correctness-final} implies that
for all maximal executions $\exseq$  the algorithm explores
some execution in $\mtclass{\exseq}$.

\begin{proof}
By induction on the set of executions $\exseq$ that
are explored during the considered execution, using the ordering $\propto$
(i.e., the order in which the corresponding calls to $\exploregb$ returned).

\medskip
{\sl Base Case:}
This case corresponds to the first sequence $\exseq$ for which the call
$\exploregb(\exseq)$ returns. By the algorithm, $\exseq$ is already
maximal, so the theorem trivially holds.

\medskip
{\sl Inductive Step:}
We prove the inductive step for an arbitrary execution $\exseq$ in $\exseqs$ by contradiction.
So, we make the assumption that there exists a sequence $\exseq$ such that
when the call to $\exploregb(\exseq)$ returns,
there is a maximal sequence $\exseq.w$ such that
the algorithm has not explored
any in $\mtclass{\exseq.w}$. To do this, we employ the following inductive hypothesis:

{\sl Inductive Hypothesis:}
The theorem holds for all execution
sequences $\exseq'$ with $\exseq' \propto \exseq$.

Let us continue the proof of the inductive step.
Let $\sleepset$ and $\WuT$ be the values of
$\sleepattr(\exseq)$ and $\wut(\exseq)$, respectively, when the call to
$\exploregb(\exseq)$ is performed.
Later, just before \Cref{clm:valid-eq-q-wq-wR}, we will impose restrictions on how to
choose $w$ among the ones for which $\exseq.w$ is not explored.
We will show that this leads to a contradiction.

For such $w$ to exist, $\exseq$ cannot be maximal, so $\finalsleep(\exseq)$
contains at least one thread.  For $p \in \finalsleep(\exseq)$, define
\begin{itemize}
\item $\exseq_p'$, such that $\exseq_p' \leq \exseq$,
$\exseq_p'.p \in \exseqs$, and $\exseq_p'.p$ is the last
execution of this form that precedes $\exseq$ (w.r.t.\ $\propto$).
If $\exseq.p \in \exseqs$ then $\exseq_p' = \exseq$, otherwise
if $p \in \finalsleep(\exseq)$ and $\exseq_p'$ is a strict prefix of~$\exseq$.
\item $w_p'$ by $\exseq = \exseq_p'.w_p'$.
\end{itemize}
It follows that $p \in \wfirsttrans{\exseq_p'}{w_p'}$.

\begin{claim}
\label{clm:not-explored}
$\wfirsttrans{\exseq}{w} \cap \finalsleep(\exseq) = \emptyset$.
\end{claim}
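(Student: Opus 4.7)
The plan is to argue by contradiction. Suppose some thread $p$ lies in both $\wfirsttrans{\exseq}{w}$ and $\finalsleep(\exseq)$. From $p \in \finalsleep(\exseq)$ I will first extract the key structural fact that $p \in \wfirsttrans{\exseq_p'}{w_p'}$ with $p \notin w_p'$: indeed, the sleep-set propagation rule at \cref{algl:optimal-blocking2-sleepset-update} only carries $p$ forward across events independent with $\nextof{\exseq_p'}{p}$, so every event in the suffix $w_p'$ extending $\exseq_p'$ to $\exseq$ is independent of $\nextof{\exseq_p'}{p}$, which is exactly the conjunction $p \in \wfirsttrans{\exseq_p'}{w_p'}$ and $p \notin w_p'$.

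The rest of the argument splits on the two disjuncts in the definition of weak initials. The case $p \notin w$ can be ruled out at once, since it would require $\exseq.w.p$ to be an execution and thus $p$ to be enabled after $\exseq.w$, contradicting the maximality of $\exseq.w$. Hence $p \in w$, so $\exseq.w \mtequiv \exseq.p.(w \setminus p)$. Composing with $\exseq.p = \exseq_p'.w_p'.p \mtequiv \exseq_p'.p.w_p'$ (from the extraction above), and using that $\mtequiv$ is a congruence with respect to concatenation, I obtain $\exseq.w \mtequiv \exseq_p'.p.w_p'.(w \setminus p)$, which is a maximal execution extending $\exseq_p'.p$.

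Finally, since $\exseq_p'.p \in \exseqs$ and $\exseq_p'.p \propto \exseq$, the inductive hypothesis applies to $\exseq_p'.p$, so the algorithm has already explored some execution in $\mtclass{\exseq_p'.p.w_p'.(w \setminus p)} = \mtclass{\exseq.w}$. This contradicts the overarching inductive-step assumption that no execution in $\mtclass{\exseq.w}$ has been explored, yielding the claim.

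The only genuinely delicate step is extracting $p \in \wfirsttrans{\exseq_p'}{w_p'}$ with $p \notin w_p'$ from $p \in \finalsleep(\exseq)$; this requires unwinding the sleep-set updates along every prefix between $\exseq_p'.p$ and $\exseq$, and in particular handling both the case $\exseq_p' = \exseq$ (where $w_p' = \emptyseq$ and the fact is vacuous) and the case where $\exseq_p'$ is a strict prefix (where it follows step-by-step from the independence filter on \cref{algl:optimal-blocking2-sleepset-update}). Everything downstream is routine congruence manipulation of $\mtequiv$.
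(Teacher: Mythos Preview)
Your proof is correct and follows essentially the same route as the paper's: argue by contradiction, use maximality of $\exseq.w$ to force $p\in w$, rewrite $\exseq.w \mtequiv \exseq_p'.p.w_p'.(w\setminus p)$ via $\indepafter{\exseq_p'}{p}{w_p'}$, and invoke the inductive hypothesis at $\exseq_p'.p$. The only difference is that you re-derive the fact $p \in \wfirsttrans{\exseq_p'}{w_p'}$ from the sleep-set propagation, whereas the paper already records this immediately after defining $\exseq_p'$ and $w_p'$ (just before the claim), so your ``delicate step'' is in fact part of the ambient setup.
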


\begin{proof}By contradiction.
Assume that there is a $p\in\wfirsttrans{\exseq}{w} \cap \finalsleep(\exseq)$.
Since $w$ is maximal, $p\in\wfirsttrans{\exseq}{w}$ implies $p \in w$ since $p$ is
enabled after $\exseq$ and independent with $w$, therefore enabled throughout $w$.
Therefore from \cref{def:initials} there is a $w''$ such that $\exseq.w
\mtequiv \exseq.p.w'' \mtequiv \exseq_p'.w_p'.p.w'' \mtequiv
\exseq_p'.p.w_p'.w''$. By the
inductive hypothesis applied to $\exseq_p'.p$, the algorithm has explored
some execution in $\mtclass{\exseq_p'.p.w_p'.w''} = \mtclass{\exseq.w}$,
which contradicts the initially made assumption about~$E.w$.
\end{proof}

For $p \in \finalsleep(\exseq)$, define
\begin{itemize}
\item $w_p$, as the longest prefix of $w$ such that $\indepafter{\exseq}{p}{w_p}$,
\item $e_p$, as the first event in $\domofafter{\exseq}{w}$ which is not in
  $w_p$.  Such an event $e_p$ must exist, otherwise $w_p = w$, which implies
  $\indepafter{\exseq}{p}{w}$, which implies $p\in\wfirsttrans{\exseq}{w}$,
  which contradicts Claim~\ref{clm:not-explored}.
\end{itemize}
Also define
\begin{itemize}
\item $q \in \finalsleep(\exseq)$, such that $w_q$ is a longest prefix
  among $w_p$.
  If there are several threads $p \in \finalsleep(\exseq)$
  such that $w_p$ is the same
  longest prefix, then pick $q$ such that
  $\exseq_q'.q$ is minimal (w.r.t.\ $\propto$).
\item $\sleepset'$ as the value of $\sleepattr(\exseq_q'.q)$ when
   the call to $\exploregb(\exseq_q'.q)$ is performed.
\end{itemize}
Without loss of generality, we will assume that among all the possible
$w$ for which the hypothesis in the Inductive Step holds
(i.e., that $\exseq.w$ not explored),
we choose $w$ so that $w_q$ (chosen as described above) is as long as possible.
\begin{claim}
\label{clm:valid-eq-q-wq-wR}
${\exseq_q'}\enables{q.w_q'.w_q}$.
\end{claim}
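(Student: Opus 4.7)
The goal is to show that $\exseq_q'.q.w_q'.w_q$ is a valid execution. The plan is three steps: (i) $q$ is enabled after $\exseq_q'$, which is immediate from $\exseq_q'.q \in \exseqs$ by the definition of $\exseq_q'$; (ii) $\exseq_q'.q.w_q'$ is a valid execution equivalent to $\exseq_q'.w_q'.q = \exseq.q$; and (iii) append $w_q$ on the right, using its independence with $q$ after $\exseq$.

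For step (ii), the key is to establish $q \in \wfirsttrans{\exseq_q'}{w_q'}$ with $q \notin w_q'$. When $\exploregb(\exseq_q'.q)$ returned, $q$ was added to $\sleepattr(\exseq_q')$. Since $q$ remains in $\finalsleep(\exseq)$, every sleep-set transfer along $w_q'$ at~\cref{algl:optimal-blocking2-sleepset-update} must have preserved $q$, so each event of $w_q'$ is independent with the continuation of $q$ at the corresponding point. This immediately gives $q \notin w_q'$ (two events of the same thread are never independent) and, by repeated commutation with each event of $w_q'$, yields $\exseq_q'.w_q'.q \mtequiv \exseq_q'.q.w_q'$, which is exactly the weak-initial condition for $q \notin w_q'$. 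In particular, $\exseq_q'.q.w_q'$ is a valid execution.

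For step (iii), $w_q$ is enabled after $\exseq = \exseq_q'.w_q'$ because $w_q$ is a prefix of the maximal execution $\exseq.w$. Moreover, $q$ is enabled after $\exseq$ (since $\exseq_q'.w_q'.q = \exseq.q$ is an execution by step (ii)), and $q$ is independent with every event of $w_q$ by the definition of $\indepafter{\exseq}{q}{w_q}$. Hence $\exseq.q.w_q$ is an execution. Combined with $\exseq_q'.q.w_q' \mtequiv \exseq_q'.w_q'.q$ from step (ii) and the fact that equivalent executions enable the same continuations, $\exseq_q'.q.w_q'.w_q$ is itself an execution, which is the claim. The main obstacle is the middle step, i.e., faithfully connecting the algorithmic invariant ``$q \in \finalsleep(\exseq)$'' to the semantic condition $q \in \wfirsttrans{\exseq_q'}{w_q'}$; everything else reduces to standard independence/commutation arguments implicit in the definition of $\mtequiv$.
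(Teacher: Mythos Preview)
Your proposal is correct and follows essentially the same approach as the paper. The paper's proof is more terse: it directly asserts $\indepafter{\exseq_q'}{q}{(w_q'.w_q)}$ by combining $\indepafter{\exseq}{q}{w_q}$ (definition of $w_q$) with the fact that $q \in \finalsleep(\exseq)$ implies $\indepafter{\exseq_q'}{q}{w_q'}$, and then concludes; your step~(ii) simply unpacks this last implication via the sleep-set update at \cref{algl:optimal-blocking2-sleepset-update}, which is exactly what the paper leaves implicit (and in fact already records as $q \in \wfirsttrans{\exseq_q'}{w_q'}$ just after defining $\exseq_q'$ and $w_q'$).
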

\begin{proof}
  Since ${\exseq_q'}\enables{q}$ (because $\exseq_q'.q$ was actually explored) and $\indepafter{\exseq_q'}{q}{(w_q'.w_q)}$
  (which follows from $\indepafter{\exseq}{q}{w_q}$ and the fact that $q \in \finalsleep(\exseq)$ implies $\indepafter{\exseq_q'}{q}{w_q'}$), it
follows that \mbox{${\exseq_q'}\enables{q.w_q'.w_q}$}.
\end{proof}

\begin{claim}
\label{clm:maximal-not-initial}
$\wfirsttrans{\exseq_q'.{q}}{w_q'.w_q} \cap \sleepset' = \emptyset$.
\end{claim}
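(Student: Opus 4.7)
The plan is to argue by contradiction. Suppose some thread $r$ lies in $\sleepset' \cap \wfirsttrans{\exseq_q'.q}{w_q'.w_q}$. I will derive that the algorithm had, prior to completing $\exploregb(\exseq)$, already explored an execution in $\mtclass{\exseq.w}$, contradicting the assumption of the surrounding inductive step that no member of $\mtclass{\exseq.w}$ has been explored.

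First I would unpack the membership $r \in \sleepset'$. Sleep sets start empty at the root and grow only through the ``add $p$ to $\sleepattr(E)$'' step executed after a recursive call $\exploregb(E.p)$ returns; they are then inherited downwards at \cref{algl:optimal-blocking2-sleepset-update}, which retains only threads that are independent with the step just taken. Tracing $r$ back along this inheritance chain through $\exseq_q'$ and through the update by $q$, I would obtain a prefix $\exseq'' \le \exseq_q'$ with $\exseq_q' = \exseq''.v''$ such that $\exseq''.r \in \exseqs$, the call $\exploregb(\exseq''.r)$ returned before the call $\exploregb(\exseq_q'.q)$ was invoked, and $r$ is independent with every event in $v''$ after $\exseq''$, as well as with $q$ after $\exseq_q'$.

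Next I would compose these independencies with the hypothesis $r \in \wfirsttrans{\exseq_q'.q}{w_q'.w_q}$ to obtain the single global statement $r \in \wfirsttrans{\exseq''}{v''.q.w_q'.w_q}$, splitting on the two clauses of \cref{def:initials} depending on whether $r$ appears in $w_q'.w_q$. From this, I would construct a maximal continuation $w_0$ of $\exseq''.r$ by placing $r$ first and reshuffling the remaining events of $\exseq.w$ consistently with happens-before, and verify $\exseq''.r.w_0 \mtequiv \exseq.w$. Since $\exploregb(\exseq''.r)$ returned before $\exploregb(\exseq)$, we have $\exseq''.r \propto \exseq$, so the inductive hypothesis applied to $\exseq''.r$ with the maximal extension $w_0$ yields an already-explored execution in $\mtclass{\exseq''.r.w_0} = \mtclass{\exseq.w}$, completing the contradiction.

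The principal obstacle I expect is the middle step: faithfully tracing how $r$ entered $\sleepset'$ through both the inheritance at \cref{algl:optimal-blocking2-sleepset-update} and the ``add to sleep set'' step after a sibling exploration, and then cleanly composing the three layers of independence (with $v''$, with $q$, and with $w_q'.w_q$) into a single weak-initials statement at $\exseq''$. In particular, under the clause of \cref{def:initials} where $r$ occurs inside $w_q'.w_q$, one must verify that the reshuffling truly realizes $\exseq''.r.w_0$ as a valid maximal execution that is $\mtequiv$-equivalent to $\exseq.w$, which relies on the fact that $\mtequiv$ is characterised by the happens-before relation and that commuting an event past independent events preserves enabledness.
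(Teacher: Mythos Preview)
Your proposal has a genuine gap at the step where you ``construct a maximal continuation $w_0$ of $\exseq''.r$ \ldots and verify $\exseq''.r.w_0 \mtequiv \exseq.w$.'' The information you have accumulated is only $r \in \wfirsttrans{\exseq''}{v''.q.w_q'.w_q}$, where $w_q$ is merely the \emph{prefix} of $w$ up to (but not including) $e_q$. To reshuffle $\exseq.w = \exseq''.v''.w_q'.w$ so that $r$ comes first, you would need $r \in \wfirsttrans{\exseq''}{v''.w_q'.w}$ for the \emph{full} $w$. This does not follow: nothing rules out that $\nextof{\exseq}{r}$ is independent with every event in $w_q$ yet conflicts with $e_q$, so that $r$'s actual occurrence in $w$ (beyond $w_q$) has $e_q$ happening-before it. In that situation no reshuffling of $\exseq.w$ places $r$ immediately after $\exseq''$, and your appeal to the inductive hypothesis at $\exseq''.r$ collapses.

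The paper's proof closes exactly this gap by exploiting the extremal choices that define~$q$. It first pushes $p$ (your $r$) forward from $\sleepattr(\exseq_q')$ into $\finalsleep(\exseq)$, then invokes Claim~\ref{clm:not-explored} to exclude $p \in w_q$, deduces $\indepafter{\exseq}{p}{w_q}$, and concludes that $w_p$ is at least as long as $w_q$, hence $e_p = e_q$ by maximality of $w_q$. Finally, the tie-breaking rule (among threads with $e_p = e_q$, $q$ was chosen so that $\exseq_q'.q$ is $\propto$-minimal) forces $p \not\in \sleepattr(\exseq_q')$ just before the call $\exploregb(\exseq_q'.q)$, contradicting $p \in \sleepset'$. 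Your argument never touches these two carefully engineered properties of $q$, and without them the contradiction cannot be reached.
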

\begin{proof}
  \Cref{clm:valid-eq-q-wq-wR} has shown that ${\exseq_q'}\enables{q.w_q'.w_q}$.
  The proof is then by contradiction: Assume that some thread
  $p$ is in $\wfirsttrans{\exseq_q'.{q}}{w_q'.w_q} \cap \sleepset'$.

  By the construction of $\sleepset'$ (i.e., $\sleepattr(\exseq_q'.q)$) at~\cref{algl:optimal-blocking2-sleepset-update},
  the thread $p$ must be in $\sleepattr(\exseq_q')$ just before the call
  $\exploregb(\exseq_q'.q)$ and
  satisfy $\indepafter{\exseq_q'}{p}{q}$.
This together with $p \in \wfirsttrans{\exseq_q'.{q}}{w_q'.w_q}$ implies that
$p\in \wfirsttrans{\exseq_q'}{q.w_q'.w_q}$, which, using
$\exseq_q'.q.w_q'.w_q \mtequiv \exseq_q'.w_q'.w_q.q$ (which follows from
$\indepafter{\exseq_q'}{q}{(w_q'.w_q)}$),
implies
$p\in \wfirsttrans{\exseq_q'}{w_q'.w_q.q}$, which 
implies $p\in \wfirsttrans{\exseq_q'}{w_q'.w_q}$.
Hence, during exploration of $\exseq_q'.w_q'$, no event in
$w_q'$ removes $p$ from the sleep set. Since $p$ was in
$\sleepattr(\exseq_q')$ just before the call to 
  $\exploregb(\exseq_q'.q)$, we have
$p \not \in w_q'$ and $p$
will end up in $\sleepattr(\exseq_q'.w_q')$ and from there in
$\finalsleep(\exseq_q'.w_q')$, which means $p \in \finalsleep(\exseq)$.
It then follows that $p \not \in w_q$, since otherwise we would have
that $\nextof{\exseq}{p}$ would not conflict with any event preceding it in
$w_q$, hence also in $w$, contradicting  $p \in \wfirsttrans{\exseq}{w}$,
thereby violating \Cref{clm:not-explored}.

Therefore $p \not \in w_q'.w_q$ which by $p\in \wfirsttrans{\exseq_q'}{w_q'.w_q}$ entails
$\indepafter{\exseq_q'}{p}{w_q'.w_q}$.  By choice of $q$, we then have
necessarily that $e_p = e_q$ (otherwise $w_p$ would be longer than $w_q$). But
since among the threads $p$ with $e_p = e_q$ we chose $q$ to be the first one
for which a call of the form $\exploregb(\exseq_q'.p)$ was performed,
we have that $p \not\in \sleepattr(\exseq_q')$ just before the call to
$\exploregb(\exseq_q'.q,\sleepset',\cdot)$, whence $p \not\in \sleepset'$.
Thus, we have a contradiction.
\end{proof}

\begin{claim}
\label{clm:q-witness}
Let $z'$ be any sequence such that $\exseq_q'.q.w_q'.w_q.z'$ is maximal
(such a $z'$ can always be found, since
$\exseq_q'.q.w_q'.w_q$ is an execution).
Then, the algorithm explores some sequence
$\exseq_q'.q.z$ in $\mtclass{\exseq_q'.q.w_q'.w_q.z'}$.
\end{claim}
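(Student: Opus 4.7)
The plan is to apply the inductive hypothesis to $\exseq_q'.q$. Since $\exseq_q'.q \in \exseqs$ and $\exploregb(\exseq_q'.q)$ returned strictly before $\exploregb(\exseq)$ by definition of $\exseq_q'$, we have $\exseq_q'.q \propto \exseq$ and the inductive hypothesis applies. \Cref{clm:valid-eq-q-wq-wR} has already established that $\exseq_q'.q.w_q'.w_q$ is a valid execution, and $z'$ is chosen so that $\exseq_q'.q.w_q'.w_q.z'$ is maximal. Thus the target is a maximal execution extending $\exseq_q'.q$, and the inductive hypothesis yields some $\hat{E} \in \exseqs$ satisfying $\hat{E} \mtequiv \exseq_q'.q.w_q'.w_q.z'$ at the moment $\exploregb(\exseq_q'.q)$ returned.

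The remaining step, and the crux of the claim, is to argue that $\hat{E}$ has $\exseq_q'.q$ as a prefix, so that it can be written as $\exseq_q'.q.z$ for some $z$. Suppose for contradiction that $\hat{E}$ does not extend $\exseq_q'.q$. Then, tracing back to the most recent common ancestor with $\exseq_q'.q$ in the depth-first exploration, $\hat{E}$ must branch off from some prefix $F \leq \exseq_q'$ through a step $F.p$ other than the one leading toward $\exseq_q'.q$. Because $\hat{E} \mtequiv \exseq_q'.q.w_q'.w_q.z'$, this diverging thread $p$ must be a weak initial of the continuation of $\exseq_q'.q.w_q'.w_q.z'$ after $F$.

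By the sleep-set propagation rule at \cref{algl:optimal-blocking2-sleepset-update}, combined with the invariant from \cref{lem:invariants}, this thread $p$ accumulates into $\sleepattr$ along the path $F,\ldots,\exseq_q',\exseq_q'.q$ whenever subsequently taken steps are independent of it, and therefore lies in $\sleepset'$ as the new sleep set is formed at the call $\exploregb(\exseq_q'.q)$. This would then give a weak initial of $w_q'.w_q$ in $\sleepset'$, directly contradicting \Cref{clm:maximal-not-initial}. Hence $\hat{E}$ must extend $\exseq_q'.q$, proving the claim.

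The main obstacle will be making the preservation argument for weak initials precise across the chain of calls: one must carefully verify that independence of $p$ with the events of $\exseq_q'.q$ following $F$ really does guarantee $p \in \sleepset'$, using the fact that weak initiality with respect to $w_q'.w_q$ is inherited from weak initiality with respect to the longer sequence $q.w_q'.w_q.z'$ via \Cref{lem:WI-hbprefix}.
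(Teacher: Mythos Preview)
Your proposal is correct and follows essentially the same approach as the paper. Both arguments apply the inductive hypothesis at $\exseq_q'.q$ and use \Cref{clm:maximal-not-initial} (extended from $w_q'.w_q$ to $w_q'.w_q.z'$ via \Cref{lem:WI-hbprefix}) together with sleep-set propagation to rule out that the equivalent execution guaranteed by the hypothesis lies outside the subtree rooted at $\exseq_q'.q$. The paper phrases this as ``no execution in the class was explored \emph{before} the call, hence the one produced by the inductive hypothesis was explored \emph{during} it and so has the form $\exseq_q'.q.z$'', whereas you take the contrapositive route through an explicit branching point $F$; the logical content and the level of detail in the sleep-set preservation step are the same.
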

\begin{proof}
From \Cref{clm:maximal-not-initial}, it follows that
$\wfirsttrans{\exseq_q'.{q}}{w_q'.w_q.z'} \cap \sleepset' = \emptyset$.
Therefore, no execution in
$\mtclass{\exseq_q'.q.w_q'.w_q.z'}$ was
explored before the call to $\exploregb(\exseq_q'.q)$,
otherwise, there would be a call $\exploregb(\exseq''.p)$ with
$\exseq''$ a prefix of $\exseq_q'$ and $p \in \sleepset'$, and defining
$w''$ by $\exseq''.w'' = \exseq_q'$, we would have
$\indepafter{\exseq''}{p}{w''}$ and
$p \in\wfirsttrans{\exseq_q'.{q}}{w_q'.w_q.z'}$, thus contradicting
$\wfirsttrans{\exseq_q'.{q}}{w_q'.w_q.z'} \cap \sleepset' = \emptyset$.
By the inductive hypothesis for $\exseq_q'.q$ applied to $w_q'.w_q.z'$,
the algorithm then explores
some sequence $\exseq_q'.q.z$ in $\mtclass{\exseq_q'.q.w_q'.w_q.z'}$.
\end{proof}

By the construction of $w_q$, the event
$\nextof{\exseq_q'}{q}$ conflicts with $e_q$. We have two cases
\begin{enumerate}
\item If $\nextof{\exseq_q'}{q}$ cannot disable $e_q$, then, letting
  $z'$ be $e_q.z''$ in \Cref{clm:q-witness}, we have
$\nextof{\exseq_q'}{q} \mayreverserace{\exseq_q'.{q}.w_q'.w_q.e_q.z''} e_q$.
From $\exseq_q'.q.z \mtequiv \exseq_q'.{q}.w_q'.w_q.e_q.z''$, it follows that
the same race between $\nextof{\exseq}{q}$ and $e_q$ will also occur in
$\exseq_q'.q.z$, that is, we have
$\nextof{\exseq_q'}{q} \mayreverserace{\exseq_q'.{q}.z} e_q$.
Since the sequence $\exseq_q'.q.z$ is actually explored by the
algorithm, it will encounter the race
$\nextof{\exseq_q'}{q} \mayreverserace{\exseq_q'.{q}.z} e_q$
at \cref{algl:optimal-blocking2-race-begin}.
When handling it,
\begin{itemize}
\item
  $\exseq$ in the algorithm will correspond to $\exseq_q'.{q}.z$ in this proof,
\item
  $\event$ in the algorithm will correspond to $\nextof{\exseq_q'}{q}$ in
  this proof,
\item
  $\event'$ in the algorithm will correspond to $e_q$ in this proof, and
\item
  $v = (\notsucc{\nextof{\exseq_q'}{q}}{\exseq_q'.{q}.z} . \procof{e_q})$
  will be the sequence $v$ at \cref{algl:optimal-blocking2-nbrace-v}.
\end{itemize}
\item
If $e_q$ can be blocked by $\nextof{\exseq_q'}{q}$, then
$e_q$ is possibly blocked after $\exseq_q'.{q}.w_q'.w_q$ (but not after $\exseq_q'.w_q'.w_q$).
Let~$w_q''$ be a shortest sequence such that $e_q$ is enabled after $\exseq_q'.{q}.w_q'.w_q.w_q''$, if such a sequence exists. If $e_q$ is not blocked after
$\exseq_q'.{q}.w_q'.w_q$, then $w_q''$ will be empty.
(The case where no such $w_q''$ exists will be considered in the next paragraph.)
Then $\tuple{e_q, \pre(\exseq,e_q)}$ is one of the tuples $\tuple{e', \exseq'}$ constructed at \cref{algl:optimal-blocking2-test-disables}.
We note that $\nextof{\exseq_q'}{q}$ is in~$\exseq'$ and that there is no event in
$\notsucc{\event}{\exseq}$ that conflicts with all events that may enable or disable~$e_q$,
whence $\nextof{\exseq_q'}{q}$ is one possible choice for $e$ at
\cref{algl:optimal-blocking2-foreach-e}.
Let $u$ be a sequence constructed at \cref{algl:optimal-blocking2-subsequence-u} with
$w_q'.w_q \infirstseqs{\exseq_q} u  \infirstseqs{\exseq_q} \notsucc{\nextof{\exseq_q'}{q}}{\exseq}$ after which
$e_q$ is enabled; such a sequence must exist since $w_q'.w_q$ itself is a possible choice.
It now follows that $u \mtequivafter{\exseq_q} w_q'.w_q$, since otherwise the sequence $u.e_q$ would be
a sequence with $\indepafter{\exseq_q}{q}{u}$, which by construction is not explored
after $\exseq$ such that $u$ is longer than $w_q'.w_q$, thereby contradicting the choice of $w$ introduced
just before \Cref{clm:valid-eq-q-wq-wR}.
The sequence $v$ will therefore have the same construction as in case 1.
\par
In the case where no such $w_q''$ exists, the event $e_q$ will be blocked after
any maximal extension $\exseq_q'.{q}.w_q'.w_q.w_q''$ of $\exseq_q'.{q}.w_q'.w_q$.
We can then proceed as in the preceding paragraph.
\end{enumerate}

\begin{claim}
\label{clm:w-infirstseqs-q'}
$w_q'.w_q.\procof{e_g} \infirstseqs{\exseq_q'} v$.
\end{claim}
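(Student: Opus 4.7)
The plan is to exhibit a sequence $v_2$ such that $\exseq_q'.w_q'.w_q.\procof{e_q}.v_2 \mtequiv \exseq_q'.v$, handling separately the two cases from the preceding discussion. In case~2 (the blocking race), the construction of $v$ already established $u \mtequivafter{\exseq_q'.q} w_q'.w_q$. Since $\nextof{\exseq_q'}{q}$ is independent of every event in $u$ and in $w_q'.w_q$ (those events do not happen-after $\nextof{\exseq_q'}{q}$), commuting $q$ across both sides yields $u \mtequivafter{\exseq_q'} w_q'.w_q$. Hence $\exseq_q'.v = \exseq_q'.u.\procof{e_q} \mtequiv \exseq_q'.w_q'.w_q.\procof{e_q}$, so $v_2 = \emptyseq$ works, using that $e_q$ is enabled after $\exseq_q'.w_q'.w_q$ because it is enabled after the equivalent $\exseq_q'.u$ by the selection criterion at \cref{algl:optimal-blocking2-subsequence-u}.

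Case~1 (the non-blocking race) is the main obstacle, since here $v$ may carry events beyond $w_q'.w_q.\procof{e_q}$. In this case $v = \notsucc{\nextof{\exseq_q'}{q}}{\exseq_q'.q.z}.\procof{e_q}$. Using $\exseq_q'.q.z \mtequiv \exseq_q'.q.w_q'.w_q.e_q.z''$ from Claim~\ref{clm:q-witness} and the observation that $\notsucc{\nextof{\exseq_q'}{q}}{\cdot}$ selects events by the happens-before relation, which is invariant across an equivalence class, one obtains $\exseq_q'.\notsucc{\nextof{\exseq_q'}{q}}{\exseq_q'.q.z} \mtequiv \exseq_q'.w_q'.w_q.z_1$, where $z_1$ denotes the subsequence of $z''$ consisting of events that do not happen-after $\nextof{\exseq_q'}{q}$. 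Note that $e_q$ itself is excluded from this subsequence because $\nextof{\exseq_q'}{q}$ happens-before $e_q$ (they are adjacent in the race). Thus $\exseq_q'.v \mtequiv \exseq_q'.w_q'.w_q.z_1.\procof{e_q}$.

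The key step is to show that every event in $z_1$ is independent of $e_q$, which permits commuting $\procof{e_q}$ past $z_1$ and taking $v_2 = z_1$. Indeed, each $\event \in z_1$ does not happen-after $\nextof{\exseq_q'}{q}$ by construction, and since $\nextof{\exseq_q'}{q}$ happens-before $e_q$, a contrapositive of transitivity shows $\event$ does not happen-after $e_q$ either; conversely, in $\exseq_q'.q.w_q'.w_q.e_q.z''$ the event $e_q$ precedes every event of $z_1$ in the total order, so $e_q$ does not happen-after any event of $z_1$. Consequently $e_q$ and each event of $z_1$ are incomparable in happens-before, hence independent, yielding $\exseq_q'.w_q'.w_q.\procof{e_q}.z_1 \mtequiv \exseq_q'.w_q'.w_q.z_1.\procof{e_q} \mtequiv \exseq_q'.v$, which establishes the claim.
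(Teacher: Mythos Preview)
Your argument is correct. For case~2 you essentially reproduce the paper's reasoning: the preceding discussion already established $u \mtequivafter{\exseq_q'} w_q'.w_q$ (you write $\exseq_q'.q$, but since $q$ is independent of both sides this is immaterial, and you immediately recover the right statement), whence $v = u.\procof{e_q}$ gives the claim with $v_2 = \emptyseq$.

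For case~1 your route genuinely differs from the paper's. The paper's two-line proof invokes ``the same argument as in case~2'' to assert that the analogue of $u$, namely $\notsucc{\nextof{\exseq_q'}{q}}{\exseq_q'.q.z}$, is again $\mtequivafter{\exseq_q'} w_q'.w_q$, appealing to the maximality constraint imposed on $w$ just before Claim~\ref{clm:valid-eq-q-wq-wR}. In other words, the paper argues that the surplus events $z_1$ must be empty, since otherwise one could exhibit a competing unexplored $E.\tilde w$ with a strictly longer $q$-independent prefix. You instead sidestep that global argument entirely: you allow $z_1$ to be nonempty and show directly, via happens-before incomparability of $e_q$ with every event of $z_1$, that $e_q$ commutes to the front, yielding $v_2 = z_1$. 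Your approach is more self-contained and does not rely on the somewhat delicate ``choice of $w$'' reasoning; the paper's approach, when it goes through, gives the stronger conclusion $z_1 = \emptyset$, which is not needed for the claim itself but is in the spirit of how case~2 was set up.
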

\begin{proof}
Using the same argument as in case 2) in the preceding paragraph, it can be established that 
$u \mtequivafter{\exseq_q} w_q'.w_q$, due to the choice of $w$ introduced
just before \Cref{clm:valid-eq-q-wq-wR}.
The claim then follows from the construction of $v$.
\end{proof}

Let $w_R$ denote $w_q.\procof{e_q}$.

\begin{claim}
\label{clm:equivalent-satisfies}
$\sleepattr(\exseq_q') \cap \wfirsttrans{\exseq_q'}{w_q'.w_R} = \emptyset$.
\end{claim}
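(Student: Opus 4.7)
The plan is to proceed by contradiction. Suppose some thread $p$ lies in $\sleepattr(\exseq_q') \cap \wfirsttrans{\exseq_q'}{w_q'.w_R}$. A first observation is that $p \neq q$: by the construction of $w_q$ and $e_q$, the event $e_q$ is not independent of $\nextof{\exseq}{q}$, so $q$ fails both branches of \cref{def:initials} with respect to $w_q'.w_R = w_q'.w_q.\procof{e_q}$ (the branch with $q$ occurring in the sequence fails because $q \neq \procof{e_q}$ and $q \not\in w_q'.w_q$, while the other branch fails because $e_q$ is dependent on $\nextof{\exseq}{q}$). Inspecting the sleep-set maintenance at \cref{algl:optimal-blocking2-sleepset-update} together with the addition right after the recursive call, the membership $p \in \sleepattr(\exseq_q')$ forces the existence of a prefix $\exseq''$ of $\exseq_q'$ with $\exseq''.p \in \exseqs$ (hence $\exseq''.p \propto \exseq$) and a suffix $w''$ with $\exseq''.w'' = \exseq_q'$ satisfying $\indepafter{\exseq''}{p}{w''}$. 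I then split along the two cases of \cref{def:initials}.

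In the case $p \in w_q'.w_R$, note that $\indepafter{\exseq''}{p}{w''}$ also forces $p \not\in w''$, so $\nextof{\exseq''}{p} = \nextof{\exseq_q'}{p}$. Combining this with the fact that every event preceding $p$'s first occurrence in $w_q'.w_R$ is independent of $\nextof{\exseq_q'}{p}$ (which is the content of $p \in \wfirsttrans{\exseq_q'}{w_q'.w_R}$), two applications of commuting yield $p \in \wfirsttrans{\exseq''}{w''.w_q'.w_R}$. Writing $w = w_q.r.w^*$ with $r = \procof{e_q}$ and setting $y := (w''.w_q'.w_R) \setminus p$, I obtain the maximal execution $\exseq''.p.y.w^* \mtequiv \exseq''.w''.w_q'.w_R.w^* = \exseq.w$. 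The inductive hypothesis applied to $\exseq''.p$ then forces the algorithm to have already explored some execution in $\mtclass{\exseq.w}$, contradicting the standing assumption of the inductive step.

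In the case $p \not\in w_q'.w_R$, I have $\indepafter{\exseq_q'}{p}{w_q'.w_R}$ and would exploit the maximal choice of $q$. First, $\indepafter{\exseq_q'}{p}{w_q'}$ implies that $p$ is never removed from the sleep set while descending from $\exseq_q'$ to $\exseq$, and since $\sleepattr(\exseq)$ only grows during the call to $\exploregb(\exseq)$, this places $p \in \finalsleep(\exseq)$. Second, $\indepafter{\exseq}{p}{w_R}$ means $\nextof{\exseq}{p}$ is independent of every event in the prefix $w_q.r$ of $w$, so $|w_p| \geq |w_q|+1$. Since $p \neq q$ and $p \in \finalsleep(\exseq)$, this contradicts the maximal choice of $q$ with respect to $|w_q|$.

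The most delicate step is the lifting of the weak-initial property from $\exseq_q'$ to $\exseq''$ in the first case; it hinges on the coincidence $\nextof{\exseq''}{p} = \nextof{\exseq_q'}{p}$ (because $p$ does not execute in $w''$), after which the event can be commuted past $w''$ and then past the pre-$p$ prefix of $w_q'.w_R$ in sequence. The remainder of the argument is routine bookkeeping over the definitions and the setup of $q$, $w_q$, and $e_q$ made at the outset of the inductive step.
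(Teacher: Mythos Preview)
Your proof is correct but organized differently from the paper's. The paper splits first on whether $p \in w_q'$; in that case it argues directly from the sleep-set update rule that $p$ cannot lie in $\sleepattr(\exseq_q')$ (since $p$, being independent of everything preceding it in $w_q'$, would never leave the sleep set along $w_q'$ and hence could not be scheduled there). When $p \not\in w_q'$ it then sub-splits on $p \in w_R$, routing through Claim~\ref{clm:not-explored} rather than re-deriving the inductive-hypothesis argument. Your case $p \in w_q'.w_R$ unifies these two sub-cases by extracting the witness $\exseq''.p$ from the sleep-set structure and invoking the inductive hypothesis on it directly; this is a valid and slightly more uniform alternative, at the cost of the extra bookkeeping needed to lift the weak-initial property from $\exseq_q'$ to $\exseq''$. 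Your second case coincides with the paper's case~2(b).

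One minor point: your opening observation that $p \neq q$ via $q \not\in \wfirsttrans{\exseq_q'}{w_q'.w_R}$ is not fully justified as stated---you assert $q \neq \procof{e_q}$ without argument (it does hold, but it needs Claim~\ref{clm:not-explored}: if $\procof{e_q} = q$ then $q \in \firsttrans{\exseq}{w}$, contradicting that claim). In any case the observation is inessential: $q \not\in \sleepattr(\exseq_q')$ at race-detection time since $\exploregb(\exseq_q'.q)$ has not yet returned, and in your case~2 the strict inequality $|w_p| > |w_q|$ already yields the contradiction regardless of whether $p = q$.
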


\begin{proof}
  Assume that some thread $p$ is in $\wfirsttrans{\exseq_q'}{w_q'.w_R}$.
  Let us consider two cases.
\begin{enumerate}
  \item If $p \in w_q'$, then it has no event happening before it in $w_q'$, which
    implies that it cannot have been in $\sleepattr(\exseq_q')$ since then it
    could not have been taken out of the sleep set to be executed in $w_q'$.
    Thus $p \not \in \sleepattr(\exseq_q')$.
  \item If $p \not \in w_q'$, then by $p \in \wfirsttrans{\exseq_q'}{w_q'.w_R}$ we have that
    \mbox{$\indepafter{\exseq_q'}{p}{w_q'}$}, which assuming $p \in
    \sleepattr(\exseq_q')$, means that $p$ will still be in the sleep set after
    $E_q'.w_q'$ and therefore $p \in \finalsleep(\exseq)$. Then:
    \begin{enumerate}
    \item If $p \in w_R$, then $p \in \firsttrans{E}{w_R}$ from which we get $p
      \in \firsttrans{E}{w}$ therefore $p \in \wfirsttrans{E}{w}$. Since $p \in
      \finalsleep(\exseq)$ this contradicts Claim~\ref{clm:not-explored}.
    \item If $p \not \in w_R$, then from $p \in \wfirsttrans{\exseq_q'}{w_q'.w_R}$
      and $p \not \in w_q'.w_R$ we have that $\indepafter{\exseq_q'}{p}{w_q'.w_R}$, which implies
      $\indepafter{\exseq_q'.w_q'}{p}{w_R}$, which is equivalent to
      $\indepafter{\exseq}{p}{w_R}$. But then $w_R = w_q.\procof{e_q}$ is a
      prefix of $w_p$, implying that $w_p$ is strictly longer than $w_q$. This
      contradicts the fact that $q$ was chosen as the thread in
      $\finalsleep(\exseq)$ with the longest prefix
      $w_q$ satisfying $\indepafter{\exseq}{q}{w_q}$.
    \end{enumerate}
\end{enumerate}
Therefore, there can be no such $p \in \wfirsttrans{\exseq_q'}{w_q'.w_R}$ and
Claim~\ref{clm:equivalent-satisfies} is proven.
\end{proof}
From \Cref{clm:w-infirstseqs-q',clm:equivalent-satisfies}
and \cref{lem:WI-hbprefix},
we get $\sleepattr(\exseq_q') \cap
\wfirsttrans{\exseq_q'}{v} = \emptyset$.
Thus, the test at~\cref{algl:optimal-blocking2-brace-if} will succeed,
and the sequence $v$ will be inserted into the 
wakeup tree $\wut(\exseq_q')$ (\cref{algl:optimal-blocking2-brace-insert}) by
the  function $\insertseq{\exseq'}{v}{\wut(E')}$ at
\crefrange{alg:wut-insert-begin}{alg:wut-insert-end}.
We first claim that during the insertion, the sequence $u$ will always
satisfy $E_q.u \leq E$ and $v$ will satisfy $u'.w_q.\procof{e_q} \infirstseqs{E_q.u} v$,
where $u.u' = w_q'$.
This is trivially true initially. To see that
it is preserved by each round of the insertion  starting at \cref{alg:wut-insert-forever},
we consider the possible children of form $u.p$. Let $r$ be the thread such that
$E_q'.u.r \leq E$ (if still $E_q'.u < E$).
We know that $E_q'.u.r$ is in $\exseqs$ when $\exploregb(\exseq)$ returns.
Furthermore, for each branch $u.p$ with $E_q.u.p \propto E_q.u.r$, we have that
$p \not \in \wfirsttrans{\exseq.u}{u'.w_q.\procof{e_q}}$ by the Inductive
Hypothesis and the assumption that $\exseq.w$ has not been explored.
On the other hand,
$r \in \wfirsttrans{\exseq.u}{u'.w_q.\procof{e_q}}$, implying that either
$u.r$ is already in $\wut(\exseq_q')$ during the insertion, in which case the
loop will move to the next iteration with invariants preserved, or
$u.r$ is not already in $\wut(\exseq_q')$,
in which case it must be added during the
current insertion and produce a branch $u.v$ such that
$u'.w_q.\procof{e_q} \infirstseqs{E_q.u} v$.
Thus, when $\insertseq{\exseq'}{v}{\wut(E')}$ returns, the exploration tree will contain
an execution of form $E.v$ with $w_q.\procof{e_q} \infirstseqs{E} v$, thereby
contradicting the assumption that $w_q$ is the longest extension of $E$ that has been explored.
This concludes the proof of the inductive step,
and~\cref{thm:correctness-final} is proven.
\end{proof}

Finally, we also prove that \gbalg is optimal in the sense that it
never explores two different but equivalent executions
and never encounters sleep set blocking.
The following theorem establishes that sleep sets alone are sufficient to
prevent exploration of two equivalent \emph{maximal} executions.
It is essentially the same property that
Optimal-DPOR~\cite{optimal-dpor-jacm} guarantees, and originally appeared
as Theorem~3.2 in the paper of Godefroid et al.~\cite{GHP:caching-fmsd95}.

\begin{theorem}
\label{thm:godefroid}
\gbalg never explores two maximal executions which are equivalent.
\end{theorem}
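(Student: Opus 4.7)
The plan is to argue by contradiction, adapting the classical sleep-set optimality argument of Godefroid et al.\ to the execution tree maintained by \gbalg. Suppose two distinct maximal executions $\exseq_1 \neq \exseq_2$ with $\exseq_1 \mtequiv \exseq_2$ are both explored, and let $\exseq$ be their longest common prefix, so $\exseq_1 = \exseq.p.u_1$ and $\exseq_2 = \exseq.q.u_2$ for distinct threads $p \neq q$. Without loss of generality, assume $\exseq.p \propto \exseq.q$, i.e., $\exploregb(\exseq.p)$ returns before $\exploregb(\exseq.q)$ is entered.

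First, I would establish a sleep-set invariant: immediately after $\exploregb(\exseq.p)$ returns, $p$ is added to $\sleepattr(\exseq)$, and by the update rule at \cref{algl:optimal-blocking2-sleepset-update}, $p$ remains in $\sleepattr(\exseq.v)$ for every prefix $v$ of $u_2$ all of whose events are independent of $\nextof{\exseq}{p}$. Using $\exseq_1 \mtequiv \exseq_2$ together with \cref{def:initials}, I would derive $p \in \wfirsttrans{\exseq}{q.u_2}$; since $p \neq q$ and $p$ eventually appears in $u_2$, the first occurrence of a $p$-event in $\exseq_2$ is preceded only by events independent of $\nextof{\exseq}{p}$, so the invariant applies all the way up to (but not including) that event.

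Second, I would argue that the exploration of $\exseq.q$ itself must have been triggered by a wakeup sequence inserted into $\wut(\exseq')$ for some $\exseq' \leq \exseq$: it cannot arise from the fresh choice at \cref{algl:optimal-blocking2-wutinit-fresh}, since $\wut(\exseq)$ is non-empty once $\exseq.p$ has been explored. Every such insertion is mediated by $\insertseq$, guarded at \crefrange{algl:optimal-blocking2-nbrace-if}{algl:optimal-blocking2-nbrace-insert} and \crefrange{algl:optimal-blocking2-brace-if}{algl:optimal-blocking2-brace-insert} by a check of the form $\sleepattr(\cdot) \cap \wfirsttrans{\cdot}{\cdot} = \emptyset$, and recursively at \cref{algl:wut-if-in-wi} by the weak-initial descent condition. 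Equivalence combined with the invariant forces $p$ to be a weak initial of the remainder of the candidate sequence at the precise point where $p \in \sleepattr(\cdot)$ holds, so every relevant guard would reject it, contradicting $\exseq.q \in \exseqs$.

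The main obstacle will be the combinatorial bookkeeping: the wakeup sequence whose exploration eventually yields $\exseq.q$ could have been inserted into $\wut(\exseq')$ either before or after $p$ enters $\sleepattr(\exseq')$, so one has to chase the inserted sequence down the tree and verify, using \cref{lem:WI-hbprefix}, that $p$ stays in $\wfirsttrans{\cdot}{v}$ along each branch selected by $\insertseq$ at \cref{algl:wut-if-in-wi}, thereby preserving the property that the $\sleepattr$ guard would trigger at the moment of the final insertion. A secondary concern is to handle on equal footing the non-blocking-race branch (\crefrange{algl:optimal-blocking2-nbrace-begin}{algl:optimal-blocking2-nbrace-end}) and the blocking-race branch (\crefrange{algl:optimal-blocking2-brace-begin}{algl:optimal-blocking2-race-end}); since both paths insert via the same guarded $\insertseq$ call, a single argument closes both cases. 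Finally, \cref{lem:invariants} can be reused to rule out the possibility that some earlier insertion produced a leaf contradicting the invariant.
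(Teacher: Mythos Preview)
Your proposal is not wrong, but it is doing far more work than necessary, and the ``main obstacle'' you flag is one you need not face at all. The paper's proof is four lines: assume $\exseq_1 \propto \exseq_2$ are equivalent maximal explored executions with longest common prefix $\exseq$, write $\exseq_1 = \exseq.p.v_1$ and $\exseq_2 = \exseq.v_2$, and invoke \cref{lem:invariants} directly. That lemma says every leaf added to the execution tree satisfies $p \not\in \wfirsttrans{\exseq'}{w}$ for every previously added $\exseq'.p$ branching off the path to the leaf; applied to $\exseq_2$ with the previously added $\exseq.p$, this gives $p \not\in \wfirsttrans{\exseq}{v_2}$. But $\exseq_1 \mtequiv \exseq_2$ with both maximal forces $p \in \firsttrans{\exseq}{v_2} \subseteq \wfirsttrans{\exseq}{v_2}$, a contradiction.

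All of your sleep-set tracking, your case split on whether the wakeup sequence was inserted before or after $p$ entered $\sleepattr(\exseq)$, your chase through the $\insertseq$ descent, and your separate handling of the two race branches---this is precisely the content of the proof of \cref{lem:invariants}, which has already been established. You mention \cref{lem:invariants} only at the very end as a tool to ``rule out'' a residual case, but in fact it is the whole argument: it already packages the invariant that the sleep-set guards and the weak-initial checks in $\insertseq$ jointly maintain. Once you recognise this, there is no combinatorial bookkeeping left to do.
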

\begin{proof}
Assume that $\exseq_1$ and $\exseq_2$ are two equivalent maximal execution
sequences that are explored by the algorithm. Then they are both in $\exseqs$.
Assume, without loss of generality, that $\exseq_1 \propto \exseq_2$.
Let $\exseq$ be their longest common prefix, and
let $\exseq_1 = \exseq.p.v_1$ and $\exseq_2 = \exseq.v_2$.
By \cref{lem:invariants} and the definition of
$\firsttrans{\exseq}{v_2}$, we have
$p \not\in\firsttrans{\exseq}{v_2}$, which contradicts
$\exseq_1 \mtequiv \exseq_2$ and the maximality of
$\exseq_1$ and $\exseq_2$.
\end{proof}

}

\end{document}